\documentclass[lettersize,journal]{IEEEtran}

\usepackage{amsmath,epsfig,amssymb,verbatim,amsopn,cite,multirow}
\usepackage{amsthm}
\usepackage{balance}
\usepackage{multirow}
\usepackage[usenames,dvipsnames]{color}
\usepackage[all]{xy}  
\usepackage{url}
\usepackage{amsfonts}
\usepackage{amssymb}
\usepackage{epsfig}
\usepackage{epstopdf}
\usepackage{bm}
\usepackage{balance}
\usepackage{graphicx}
\usepackage{subcaption}
\usepackage{footnote}
\usepackage{cancel}
\usepackage{algorithm,algorithmic}
\usepackage{dblfloatfix}

\newtheorem{Lemma}{Lemma}

\newtheorem{Remark}{Remark}

\newtheorem{proposition}{Proposition}

\newcounter{mytempeqcounter}
\setlength{\abovedisplayskip}{1pt}
\setlength{\belowdisplayskip}{2pt}
\setlength{\textfloatsep}{6pt}
\setlength{\skip\footins}{8pt}
\setlength{\parskip}{0cm}

\usepackage{pifont}

\newcommand{\Ex}{\mathbb{E}}

\usepackage{tikz}
\newcommand{\tikzxmark}{%
\tikz[scale=0.15] {
    \draw[line width=0.6,line cap=round] (0,0) to [bend left=5] (1,1);
    \draw[line width=0.6,line cap=round] (0.1,0.85) to [bend right=3] (0.8,0.05);
}}


\newcommand{\qa}{{\bf a}}

\newcommand{\qe}{{\bf e}}

\newcommand{\qg}{{\bf g}}
\newcommand{\qh}{{\bf h}}

\newcommand{\qq}{{\bf q}}

\newcommand{\qt}{{\bf t}}

\newcommand{\qx}{{\bf x}}

\newcommand{\qA}{{\bf A}}
\newcommand{\qB}{{\bf B}}

\newcommand{\qD}{{\bf D}}

\newcommand{\qG}{{\bf G}}

\newcommand{\qI}{{\bf I}}

\newcommand{\qQ}{{\bf Q}}

\newcommand{\qU}{{\bf U}}

\newcommand{\qW}{{\bf W}}

\newcommand{\sen}{\mathrm{Sen}}
\newcommand{\DS}{\mathrm{DS}}
\newcommand{\DST}{\mathrm{DST}}

\newcommand{\Com}{\mathrm{Com}}
\newcommand{\ave}{\mathrm{ave}}
\newcommand{\IR}{\mathrm{IR}}

\newcommand{\tmlsen}{\qt_{ml}^{\mathrm{Sen}}}

\newcommand{\M}{\mathcal{M}}
\newcommand{\Ks}{\mathcal{S}_m}
\newcommand{\Kw}{\mathcal{W}_m}

\newcommand{\ZFcom}{\mathrm{ZF-Com}}
\newcommand{\MRcom}{\mathrm{MR-Com}}
\newcommand{\MR}{\text{MR}}
\newcommand{\SINR}{\mathrm{SINR}}
\newcommand{\SE}{\mathrm{SE}}
\newcommand{\tmkMRcom}{\qt_{mk}^{\mathrm{MR-Com}}}
\newcommand{\tmkZFcom}{\qt_{mk}^{\mathrm{ZF-Com}}}

\newcommand{\ETAC}{\boldsymbol{\eta}^{c}}
\newcommand{\ETAS}{\boldsymbol{\eta}^{s}}
\newcommand{\gammk}{\gamma_{mk}}
\newcommand{\betmk}{\beta_{mk}}
\newcommand{\gtmk}{\qg^{\dag}_{m k}}

\newcommand{\Bxi}{\boldsymbol{\xi}}

\newcommand{\Sm}{\mathcal{S}_{m}}
\newcommand{\Wm}{\mathcal{W}_{m}}
\newcommand{\Sn}{\sigma^2_n}

\newcommand{\BOmega}{\boldsymbol{\Omega}}
\newcommand{\etamls}{\eta^{s}_{ml}}
\newcommand{\etamlps}{\eta^{s}_{ml'}}
\newcommand{\etamkc}{{\eta}_{mk}^c}
\newcommand{\etamkpc}{{\eta}_{mk'}^c}

\newcommand{\KK}{\mathcal{K}}
\newcommand{\LL}{\mathcal{L}}
\newcommand{\MM}{\mathcal{M}}
\newcommand{\ZK}{\mathcal{Z}_k}
\newcommand{\MK}{\mathcal{M}_k}

\newcommand{\sumgammk}{\sum_{k\in\KK}\gammk}
\newcommand{\SEk}{\mathrm{SE}_{k}}

\newcommand{\trac}{\mathrm{tr}}
\newcommand{\zmzk}{\zeta_m^{\ZK}}
\newcommand{\zmmk}{\zeta_m^{\MK}}
\newcommand{\dkkw}{\delta_k^{\Kw}}
\newcommand{\dkks}{\delta_k^{\Ks}}

\title{Multiple-Target Detection in Cell-Free Massive MIMO-Assisted ISAC}
\author{Mohamed Elfiatoure, Mohammadali Mohammadi,~\IEEEmembership{Senior Member,~IEEE,}
Hien Quoc Ngo,~\IEEEmembership{Fellow,~IEEE,}\\
Hyundong Shin,~\IEEEmembership{Fellow,~IEEE,} and  Michail Matthaiou,~\IEEEmembership{Fellow,~IEEE}
\thanks{This work is a contribution by Project REASON, a UK Government funded project under the Future Open Networks Research Challenge (FONRC) sponsored by the Department of Science Innovation and Technology (DSIT). It was also supported by the U.K. Engineering and Physical Sciences Research
Council (EPSRC) (grant No. EP/X04047X/1). The work of  H. Q. Ngo
 was supported by the U.K. Research and Innovation Future
Leaders Fellowships under Grant MR/X010635/1, and a research grant from the Department for the Economy Northern Ireland under the US-Ireland R\&D Partnership Programme. The work of M. Mohammadi and M. Matthaiou was supported by the European
Research Council (ERC) under the European Union’s Horizon 2020 research
and innovation programme (grant agreement No. 101001331). \emph{(Corresponding authors: Michail~Matthaiou; Hyundong~Shin.)} }
\thanks{M. Elfiatoure, M. Mohammadi, H. Q. Ngo, and M. Matthaiou are with the Centre for Wireless Innovation (CWI), Queen's University Belfast, BT3 9DT Belfast, U.K., (email: \{melfiatoure01, m.mohammadi, hien.ngo, m.matthaiou\}@qub.ac.uk). M. Matthaiou is also with the Department of Electronic Engineering, Kyung Hee University, Yongin-si, Gyeonggi-do 17104, Republic of Korea.

H. Shin is with the Department of Electronics and Information Convergence Engineering, Kyung Hee University, 1732 Deogyeong-daero, Giheung-gu, Yongin-si, Gyeonggi-do 17104, Republic of Korea (e-mail: hshin@khu.ac.kr).

Parts of this paper appeared at the 2023 IEEE GLOBECOM conference~\cite{elfiatoure2023cell}.
}}

\begin{document}
\bstctlcite{IEEEexample:BSTcontrol}
\maketitle

\begin{abstract}
We propose a distributed implementation of integrated sensing and communication (ISAC) underpinned by a massive multiple input multiple output (CF-mMIMO) architecture without cells. Distributed multi-antenna access points (APs) simultaneously serve communication users (UEs) and emit probing signals towards multiple specified zones for sensing. 
The APs can switch between communication and sensing modes, and adjust their transmit power based on the network settings and sensing and communication operations' requirements. By considering local partial zero-forcing  and maximum-ratio-transmit precoding at the APs for communication and sensing, respectively, we first derive closed-form expressions for the spectral efficiency (SE) of the UEs and the mainlobe-to-average-sidelobe ratio (MASR) of the sensing zones. Then, a joint operation mode selection and power control design problem is formulated to maximize the SE fairness among the UEs, while ensuring specific levels of MASR for sensing zones. The complicated mixed-integer problem is relaxed and solved via a successive convex approximation approach. We further propose a low-complexity design, where the AP mode selection is designed through a greedy algorithm and then  power control is designed based on this chosen mode. Our findings reveal that the proposed scheme can consistently ensure a sensing success rate of $100\%$ for different network setups with a satisfactory fairness among all UEs. 

\end{abstract}

\begin{IEEEkeywords}
Cell-free massive multiple-input multiple-output (CF-mMIMO), integrated sensing and communication (ISAC),  mainlobe-to-average sensing ratio (MASR), spectral efficiency (SE).
\end{IEEEkeywords}

\vspace{-1em}
\section{Introduction}
The advent of integrated sensing and communication (ISAC) marks a significant leap forward for the sixth generation (6G) of wireless technology, introducing a pivotal shift in the use of limited spectral resources~\cite{liu2022integrated:dul}. This innovative approach aims at enhancing both the radar sensing capabilities and communication efficiency, by assigning  frequencies traditionally dedicated to sensing for wireless communication purposes as well. Such dual functionality supports high data-rate applications, contributing to the efficient utilization of the spectral resources. 
ISAC's innovative framework has catalyzed the development of two distinct system architectures: separated systems and co-located systems. A separated system, also referred to as coexistence communication and radar, relies on distinct devices to perform radar sensing and communication within the same frequency band, generating sensing and communication beams independently. This method has been thoroughly investigated in our previous studies~\cite{elfiatoure2023coexistence, Elfiatoure:JCIN:2023} among others. In contrast, co-located systems, or dual-functional radar-communication (DFRC), integrate these functionalities within a single device. This allows for the simultaneous detection of radar targets and communication with multiple users (UEs), facilitated by an ISAC base station (BS) that utilizes shared hardware components for concurrent beam generation~\cite{liu2020joint}. Recognized for its efficiency and cost-effectiveness, the DFRC architecture is heralded as a significant breakthrough for the beyond 5G (B5G) era, poised to redefine the network performance and optimization~\cite{liu2018dual,liu2022seventy,liu2018mu}.

Transitioning from separated to co-located systems reflects an intention to streamline the operational efficiency and reduce the complexity inherent in managing the exchange of side-information between radar and communication technologies~\cite{liu2022seventy,liu2018mu}. This evolution is notably supported by the integration of massive multiple-input multiple-output (mMIMO) technology, which particularly within co-located architectures, can effectively minimize the inter-user interference across the same time-frequency resources. Leveraging the substantial spatial degrees of freedom and intrinsic sensing capabilities, the fusion of mMIMO technology with ISAC is poised to deliver high-quality wireless communication, while ensuring high-resolution and robust sensing. Such advancements are instrumental in accelerating the development of various emerging applications, including, but not limited to, autonomous driving in intelligent transportation systems and unmanned aerial vehicle networks in smart cities~\cite{ali2020leveraging, zhang2020perceptive}.

Expanding on this groundwork, recent research endeavors, as highlighted in studies by~\cite{temiz2020dual} and~\cite{temiz2021optimized}, have delved into the intricacies of a dual-functional system that marries communication and radar functionalities within an mMIMO orthogonal frequency division mutliplexing (OFDM) architecture. These studies comprehensively address both downlink and uplink scenarios, offering insightful analytical derivations of the achievable rate and detection efficiency under various levels of channel state information (CSI) knowledge, encompassing both perfect and imperfect scenarios. Moreover, the work in~\cite{temiz2021dual} sought to optimize this dual-functional system. The focus was on maximizing the sum-rate and energy efficiency, while ensuring compliance with essential operational parameters, including a baseline target detection probability and the satisfaction of individual UE rate demands.

Another powerful network topology is cell-free mMIMO (CF-mMIMO), which offers great potential to significantly improve the network connectivity~\cite{Mohammadi:PROC.2024}. Diverging from traditional cellular setups that centralize all antennas at the BS, CF-mMIMO distributes antennas via the use of multiple access points (APs). These APs are then coordinated by several central processing units (CPUs), paving the way for a more distributed and flexible network architecture~\cite{ngo2024ultra,Matthaiou:COMMag:2021,Hien:cellfree,zhang2020prospective,Mohammadi:TWC:2024}. A CF-mMIMO ISAC topology has several advantages over the single-cell ISAC, including larger monitoring areas, broader sensing coverage, and a wider range of sensing angles. Nevertheless, in such systems, proper resource allocation plays a key role to facilitate both communication and sensing functionalities. 

\vspace{-1.5em}
\subsection{Related Works}
Some recent papers have focused on integrating ISAC capabilities within CF-mMIMO. These studies aim to harness the benefits of both ISAC and CF-mMIMO technologies to further enhance the network performance and service delivery, as evidenced by works such as those in~\cite{zeng2023integrated,Behdad:GC:2022,demirhan2023cell, da2023multi,Mao:TWC:2024}.
\begin{table*}\label{tabel:Survey}
	\centering
	\caption{\small Contrasting our contributions to the mMIMO ISAC literature}
	\vspace{-0.6em}
	\small
\begin{tabular}{|p{3.8cm}|p{1.7cm}|p{1.0cm}|p{1.4cm}|p{1.4cm}|p{1.4cm}|p{1.4cm}|}
	\hline
        \centering\textbf{Contributions} 
        &\centering \textbf{This paper}
        &\centering\cite{zeng2023integrated } 
        &\centering\cite{Behdad:GC:2022} 
        &\centering\cite{demirhan2023cell} 
        &\centering\cite{da2023multi}
        &\centering\cite{Mao:TWC:2024}
        \cr   
        \hline 
        
        CF-mMIMO    
         &\centering\checkmark  
        &\centering\checkmark  
        & \centering\checkmark  
        & \centering\checkmark     
        & \centering\checkmark
        & \centering\checkmark
        \cr
        
        \hline

        Power allocation     
        &\centering\checkmark  
        &\centering\checkmark  
        & \centering\checkmark  
        & \centering\checkmark     
        & \centering\tikzxmark
        & \centering\checkmark
         \cr
        
        \hline

        
       AP operation mode selection          
        &\centering\checkmark  
        &\centering\tikzxmark  
        &\centering\tikzxmark 
        &\centering\tikzxmark 
        & \centering\tikzxmark
        & \centering\tikzxmark
        \cr

        \hline
        Multiple targets         
        &\centering\checkmark  
        &\centering\tikzxmark  
        & \centering\tikzxmark
        & \centering\tikzxmark
       & \centering\tikzxmark
       & \centering\tikzxmark
        \cr
 
           \hline
        Asymptotic analysis         
        &\centering\checkmark  
        &\centering\tikzxmark   
        &\centering\tikzxmark
        &\centering\tikzxmark 
        &\centering\tikzxmark
        & \centering\tikzxmark
        \cr
        
        \hline

\end{tabular}
\vspace{-1.2em}
\label{Contribution}
\end{table*}
Among these contributions, Zeng \textit{et al.}\cite{zeng2023integrated}  made notable advancements in power allocation techniques within CF-mMIMO ISAC systems, optimizing the balance between communication and sensing functionalities, where the communication and sensing performances of the system are quantified using the sum communication rates and the Cram\'{e}r-Rao lower bound (CRLB), respectively. Behdad \textit{et al.}\cite{Behdad:GC:2022} further explored the operational dynamics of CF-mMIMO ISAC systems, demonstrating how transmit APs not only provide service to UEs but also partake in sensing operations for target location identification. Their research introduced a power distribution strategy that enhances the sensing signal-to-noise ratio , while adhering to signal-to-interference-plus-noise ratio (SINR) constraints for UEs.
Building on these themes, Demirhan \textit{et al.}~\cite{demirhan2023cell} tackled the challenges of beamforming design in CF-mMIMO ISAC systems. They proposed a joint beamforming approach designed to maximize the sensing SNR without compromising the communication SINR, highlighting the critical balance required for effective sensing and communication within these networks.
Da Silva~\textit{et al.}~\cite{da2023multi} examined the probability of an internal adversary inferring target location information from the received signal by considering the design of transmit precoders that jointly optimize sensing and communication requirements in a multi-static cell-free ISAC network.  Mao \textit{et al.}~\cite{Mao:TWC:2024} investigated downlink transmit beamforming design for the CF-mMIMO ISAC system and presented the achievable communication-sensing region of the ISAC system under finite resource constraints for three cases: 1) sensing-only, 2) communication-only, and 3) ISAC.

\vspace{-0.5em}
\subsection{Contributions}
 Contrary to the above studies, which assume static AP operation modes, our research introduces an innovative approach to CF-mMIMO ISAC networks by adopting \textit{dynamic AP operation mode selection}. This novel strategy aims to optimize the spectral efficiency (SE) fairness among the UEs, while also ensuring the system's capability to detect multiple designated targets. By leveraging long-term CSI, the APs are categorized into communication APs (C-APs) and sensing APs (S-APs) to simultaneously support both downlink communication and sensing tasks. Based on this AP mode selection, we propose an ISAC protocol where the S-APs primarily steer the beams to the target directions of interest acquired during the previous observation, in line with the foundational research on MIMO radar sensing outlined in~\cite{Stoica:TSP:2007,Liu:TSP:2020,Dong:TGCN:2023,Zhang:TWC:2024,Chen:TCOM:2024}. Meanwhile, the communication signals transmitted by the C-APs are exclusively used for downlink data transmission and are not involved in the sensing tasks. Our proposed ISAC protocol represents a distributed implementation of the colocated ISAC design presented in~\cite{Chen:TCOM:2024}. The authors in~\cite{Chen:TCOM:2024} investigated the emission of communication and sensing signals through different antenna groups from a single BS. The main objective is to enhance the independence of the sensing and communication functions. This approach is advantageous in communication-centric or sensing-centric ISAC designs, as it simplifies the achievement of optimal performance for the primary function.  Our protocol introduces an additional degree of freedom compared to~\cite{Chen:TCOM:2024} through flexible AP mode selection, offering two key benefits: 1) it mitigates the inter-system interference by positioning the C-APs (S-APs) closer to the UEs (sensing zones), thereby reducing the required transmit power, 2) it ensures seamless communication and sensing coverage with lower complexity and cost,\footnote{ The existing literature is mainly looking into single-cell scenarios, where a single MIMO BS serves one or multiple UEs while sensing one or multiple targets~\cite{Zhang:TWC:2024,Chen:TCOM:2024}. To overcome practical challenges, such as the high environmental dependence, limited coverage, and numerous blind spots, the use of reconfigurable intelligent surfaces has been proposed in~\cite{Zhang:TWC:2024,Chen:TCOM:2024}. However, these structures often introduce considerable complexity and overhead, particularly in the context of channel estimation and computational demands. Our design effectively addresses these challenges with lower complexity and cost.} and
3) this design reduces the potential risk of information leakage to suspicious targets, such as eavesdropping unmanned aerial vehicles. Since we only focus on the beampattern design for sensing, the processes of sensing parameter estimation and information demodulation are not considered in this paper. A detailed discussion can be found in~\cite{liu2020joint}.

 Compared to our recent work~\cite{elfiatoure2023cell}, where a single sensing zone area was considered and maximum-ratio transmission (MRT) was performed at all APs, we consider a multiple sensing zone scenario and apply local partial zero-forcing (PZF) and MRT precoding at the C-APs and S-APs, respectively. The principle behind the PZF design is that each AP only suppresses the inter-UE interference it causes to the strongest UEs, namely the UEs with the largest channel gain, while the inter-UE interference caused
to the weakest UEs is tolerated~\cite{Interdonato:TCOM:2020}. The main contributions of our paper can be summarized as follows:

\begin{itemize}

\item We develop a framework for analyzing the performance of a CF-mMIMO ISAC system with multiple C-APs and S-APs employing PZF and MRT precodings, respectively, under channel estimation errors. By leveraging the use-and-then-forget strategy, we derive closed-form expressions for the downlink SE of the communication UEs and mainlobe-to-average sensing ratio (MASR) of the sensing zones. We further pursue an asymptotic SE analysis, which discloses that when the number of C-APs and S-APs, denoted by $M_c$ and $M_s$ respectively, are large, we can scale down the transmit powers at the C-APs and S-APs proportionally to $1/M_c^2$ and $1/M_s^2$, respectively.

\item We formulate an interesting problem of joint AP operation mode selection and power control design, considering  per-AP power constraints and a MASR constraint for target detection in a multi-target environment. A new algorithm is developed to solve the challenging formulated mixed-integer non-convex problem. In particular, we transform the formulated problem into a more tractable problem with continuous variables only. Then, we solve the problem using successive convex approximation (SCA) techniques.

\item To achieve a performance-complexity tradeoff, we propose a greedy algorithm for AP operation mode selection. This algorithm iteratively selects the optimized mode for each AP by considering the constraints for sensing operation, while maximizing fairness among UEs. Then, we propose a power control design algorithm for fixed AP operation mode design.  

\item Our numerical results show that the proposed joint algorithm can provide noticeable fairness among the UEs, while ensuring successful sensing performance for all sensing zones. The greedy algorithm achieves an acceptable level of success in the sensing rate.

\end{itemize}

A comparison of our contributions against the state of the art in the space of CF-mMIMO ISAC is tabulated in Table~\ref{Contribution}.  A key distinguishing contribution of this work, compared to the existing literature, is the joint consideration of AP mode selection and power allocation, which enhances the SE of communication users while simultaneously fulfilling the network's sensing requirements. AP mode selection amplifies the flexibility in managing the interference between communication and sensing tasks, thereby improving the efficiency of resource allocation within the network. Moreover, this study addresses a multi-target scenario — a crucial aspect that has been overlooked in previous research. 

\subsection{Paper Organization and Notation}
The rest of this paper is organized as follows: In Section~\ref{sec:sys model}, we describe the system model for the proposed CF-mMIMO ISAC system with the corresponding communication and sensing metrics. In Section~\ref{sec:Asy}, we analyze the per-UE SE in the asymptotic regime. The proposed AP mode selection and power allocation schemes are discussed in Section~\ref{sec:power}. Finally, the numerical results and some discussions are provided in Section~\ref{sec:num}, followed by the conclusion remarks in Section~\ref{sec:conc}.

\textit{Notation:} We use bold lower (capital) case letters to denote vectors (matrices). The superscript $(\cdot)^\dag$ stands for the Hermitian operation;  $\mathbf{I}_N$ denotes the $N\times N$ identity matrix; $\trac(\cdot)$ and $(\cdot)^{-1}$ denote the trace operator and matrix inverse, respectively. A circular, zero-mean, symmetric complex Gaussian distribution having variance $\sigma^2$ is denoted by $\mathcal{CN}(0,\sigma^2)$. Finally, $\mathbb{E}\{\cdot\}$ denotes the statistical expectation.  

\section{System Model}~\label{sec:sys model}
We consider a CF-mMIMO ISAC system operating under time division duplex (TDD), where $M$ APs serve $K$ downlink UEs, while concurrently emitting probing signals towards $L$ specific sensing zones (Please see Fig.~\ref{fig:systemmodel}).  Each UE is equipped with a single antenna, and each AP equipped with a uniform linear array (ULA) of $N$ antennas. All APs and UEs function in a half-duplex mode. For the ease of exposition, we introduce the sets $\mathcal{M}\triangleq\{1,\ldots,M\}$, $\LL\triangleq \{1,\dots,L\}$, and $\KK\triangleq \{1,\dots,K\}$ to represent the indices of the APs, sensing areas, and UEs, respectively. Downlink communication and target detection are conducted concomitantly over the same frequency band. To accommodate the varying network demands, a dynamic AP operation mode selection strategy is implemented, determining the allocation of APs for downlink information transmission or radar sensing. A subset of APs, designated as communication-APs, termed as C-APs, is responsible for delivering information to the UEs. In contrast, the rest, labeled as S-APs, are employed for target detection. This categorization enables a specialized functionality of each AP group, where the C-APs aim to optimize the communication quality, while the S-APs enhance the radar sensing accuracy.
\vspace{-0.7em}
\subsection{Channel Model and Uplink Training}
We assume a quasi-static channel model, with each channel coherence interval spanning a duration of $\tau$ symbols. The duration of the training is denoted as $\tau_t$, while the duration of downlink information transfer and target detection is $(\tau-\tau_t)$.

For the sensing channel model, we assume there is a line-of-sight (LoS) path between the sensing area and each AP, which is a commonly adopted model in the literature~\cite{Liu:TSP:2020,Behdad:GC:2022,demirhan2023cell}. The direction of sensing zone $l\in\mathcal{L}$ is denoted by  $\theta_{t,ml}$ which is angle of departure (AoD) associated with AP $m$ towards zone $\ell$. The LoS channel between AP $m$ and sensing area $l$ is given by 
\begin{align}~\label{eq:barhkbarH2}
    \bar{\qg}_{ml} &=\sqrt{\zeta_{ml}} \qa_N(\theta_{t,ml}),~\forall m\in \M,
\end{align}
where $\zeta_{ml}$ denotes the path loss from AP $m$ to sensing zone $l$~\cite{Mao:TWC:2024,Behdad:GC:2022}. Consistent with previous literature, we neglect the paths resulting from multi-reflections off other objects caused by the presence of the target~\cite{Zhao:CLET:2022,demirhan2023cell,Behdad:GC:2022}. Moreover, $\qa_N(\theta_{t,ml})\in\mathbb{C}^{N\times 1}$ denotes the antenna array response vector from AP $m$ towards the sensing area $l$, whose  $n$-th entry is given by
\vspace{-0.5em}
\begin{align}~\label{eq:LoSarray}
    &[\qa_N(\theta_{t,ml})]_{n}
    = \exp\bigg( j\frac{2\pi d}{\lambda}  (n-1) \sin (\theta_{t,ml}) \bigg),
    \end{align}
where $d$ and $\lambda$ denote the AP antenna spacing and carrier wavelength, respectively.

The channel vector between the $m$-th AP and $k$-th UE is modeled as $\qg_{mk}= \sqrt{\beta_{mk}} \qh_{mk}$, where $\beta_{mk} $ is the large scale fading coefficient, while $\qh_{mk} \in \mathbb{C}^{N \times 1}$ is the small-scale fading vector, whose elements are independent and identically distributed $\mathcal{CN} (0, 1)$ random variables (RVs). 

An uplink training process is implemented to acquire the local CSI between each AP and all UEs.
In each coherence block of length $\tau$, all UEs are assumed to transmit their pairwisely orthogonal pilot sequence of length $\tau_t$ to all APs, which requires $\tau_t\geq K$. At AP $m$, $\qg_{mk}$ is estimated by using the received pilot signals and applying the minimum mean-square error (MMSE) estimation technique. By following~\cite{Hien:cellfree}, the MMSE estimate  $\hat{\qg}_{mk}$ of $\qg_{mk}$ is obtained as $\hat{\qg}_{mk}  \sim \mathcal{CN}\left(\boldsymbol{0}, \gammk \qI_N \right)$, where 
$\gammk \triangleq\frac { \tau_{t} \rho_{t}\beta _{mk}^{2}} 
{ \tau_{t} \rho_{t}\beta _{mk}+1}$, while $\rho_{t}$ represents the normalized transmit power of each pilot symbol.
\begin{figure}[t]
\centering
\includegraphics[width=0.47\textwidth]{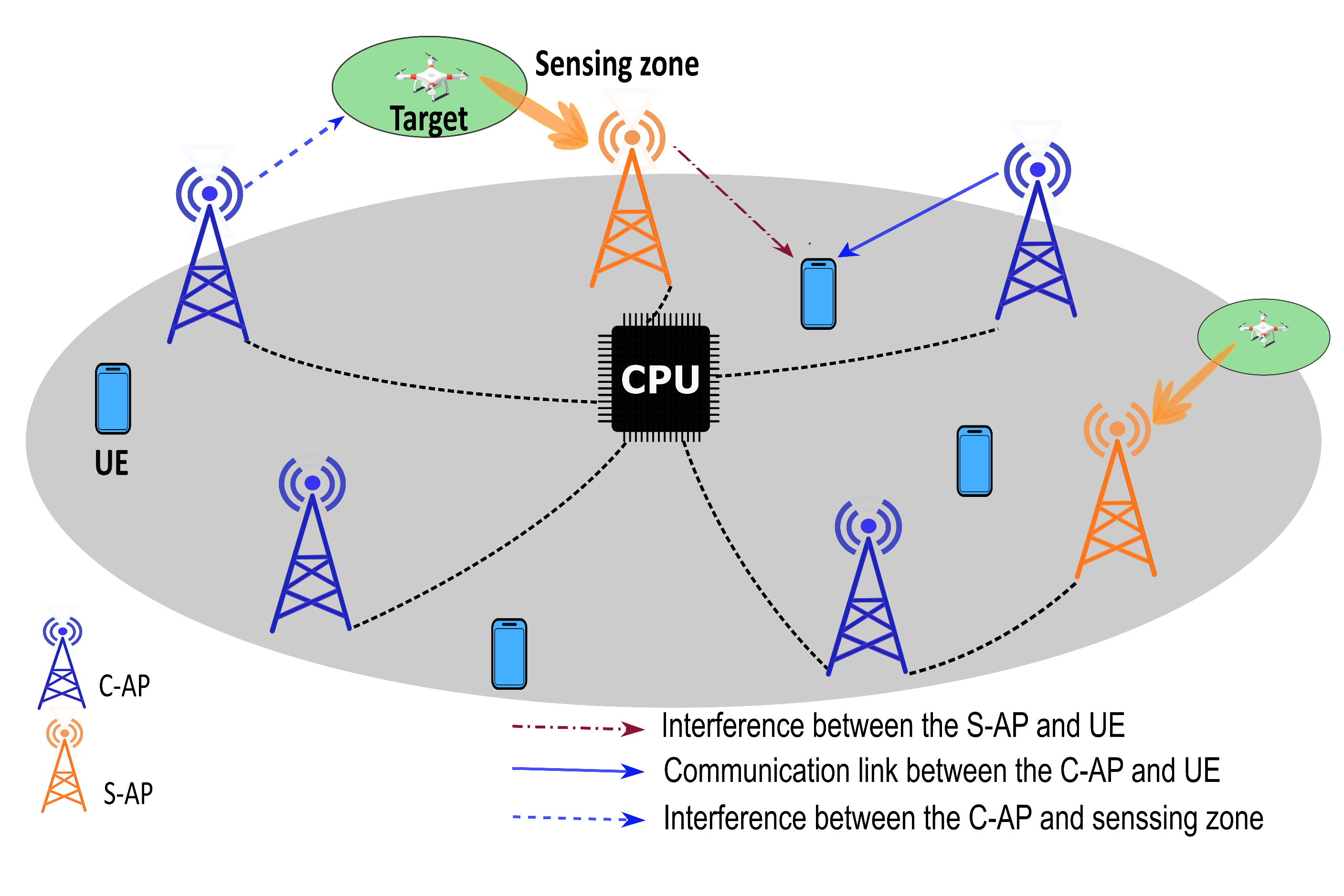}
\vspace{ 0em}
\caption{\small Illustration of the CF-mMIMO ISAC system.}
\vspace{1em}
\label{fig:systemmodel}
\end{figure}
\vspace{-0.9em}
\subsection{Probing and Data Signal Transmission}
AP operation mode selection is performed by considering large-scale fading effects and relying on the statistical CSI. The binary variables used to indicate the operation mode for each AP $m$ are defined as
\begin{align}
a_{m} =
 \begin{cases} &\text{1, if AP $m$ operates as C-AP}\\  
 &\text{0, if AP $m$ operates as S-AP.}
 \end{cases}
\end{align}

In the distributed ISAC protocol, the communication signals are not utilized for sensing. Instead, dedicated sensing signals are transmitted from the S-APs to achieve high beamforming gain for sensing, thereby minimizing the interference at the communication users. Specifically, by assigning APs near the target zones for sensing tasks and those near UEs for downlink data transmission, the transmit power of the APs can be reduced, thereby effectively managing interference. Additionally, this distributed approach simplifies the beamforming and power control, since each AP is responsible for a single task. Moreover, this design reduces the potential risk of information leakage to suspicious targets, such as eavesdropping unmanned aerial vehicles. Let $\mathbf{x}_{c,m}$ and $\mathbf{x}_{r,m}$ denote the data and probing signals, respectively. The signal vector transmitted from AP $m$ can be expressed as 
\begin{align}
\mathbf{x}_m=\sqrt{a_{m}} {\qx}_{c,m}+\sqrt{(1-a_{m})}{\qx_{r,m}}.
\end{align}

Under the assumption that the propagation path between the APs and targets is LoS, the
baseband signal at direction, $\theta_{t,ml}$ can be written as\footnote{ An alternative beamforming approach, e.g., ZF, could be considered at the S-APs to mitigate the interference between different targets. This design requires a new mathematical framework and, therefore, remains a future research direction.}
\begin{align}
   \qx_{r,m}=\sum\nolimits _{l\in\LL}\sqrt{\rho\etamls } \qa_N(\theta_{t,ml}) x_{r,ml},
\end{align}
where $\rho$ denotes the maximum normalized transmit power; $\etamls$ is the power control coefficient at S-AP $m$ related to sensing zone $l$, while $ x_{r,ml}$, with $\Ex\{\vert  x_{r,ml}\vert^2\}=1$, is the radar sensing symbol from S-AP $m$ for sensing the $l$th zone. This directed beampattern allows each S-AP to transmit a beam focused on a specific sector, facilitating spatially distributed sensing across the network.
 Therefore, to detect the presence of a target within its designated sector, each S-AP tailors its beampattern to concentrate the radiated energy towards a specific direction. This focused beamforming ensures that the radar echo from the target is sufficiently strong for reliable detection.
 
\subsection{Downlink Communication and SE}
In this section, we investigate the performance of the PZF precoding scheme. The PZF design effectively overcomes the interference management limitations of MRT (which arise from inter-UE interference) and the implementation challenges of ZF (i.e., the number of antennas at each AP must satisfy the condition $N>K$) within a distributed framework.\footnote{ The PZF precoding design facilitates fully distributed interference suppression in a scalable manner, minimizing the need for coordination among the APs and thereby avoiding unmanageable fronthaul traffic. However, this limited coordination restricts the direct utilization of signals transmitted by the C-APs for sensing tasks within the network, unless additional inter-AP coordination mechanisms are implemented.}

PZF precoding emerges as an innovative method with the potential to combine the benefits of both ZF and MR approaches. The cornerstone of PZF lies in its ability to suppress interference caused to the strongest UEs, i.e., UEs with the highest channel gain, while tolerating the interference caused to the weakest UEs.
In more detail, for any given AP (denoted as AP $m$), the set of active UEs is conceptually divided into two distinct subsets: (i) strong UEs, and (ii) weak UEs. The sets of indices for strong and weak UEs are represented as $\Sm\subset \{1,\ldots,K\}$, and $\Wm\subset \{1,\ldots,K\}$ respectively. It should be noted that $\Sm \cap \Wm=\phi$ and $|\Sm| + |\Wm|=K$.
The grouping strategy can be determined using various criteria. For example, it may be based on the mean-square of the channel gain: a UE $k$ is allocated to $\Sm$ if $\beta_{mk}$ exceeds a predetermined threshold, otherwise, $k$ is placed in $\Wm$.
PZF operates locally in the following manner: AP $m$ transmits to all the UEs $k \in \Sm$ using ZF, and to all the UEs $k \in \Wm$ using MR.  The  UE grouping can be based on different criteria. Inspired by~\cite{Ngo:TGCN:2018,Interdonato:TCOM:2020}, the UE grouping strategy in PZF relies on the following rule
\vspace{-0.2em}
\begin{align}~\label{eq:groupin:criterion}
   \sum\nolimits_{k=1}^{\vert\Sm\vert} \frac{\bar{\beta}_{mk}}{\sum\nolimits_{t=1}^{K} \beta_{mt}} \geq \varrho\%,
\end{align}
where AP $m$ constructs its set $\Sm$ by selecting the UEs that contribute to, at least, $\varrho\%$ of the overall channel
gain. Note that in~\eqref{eq:groupin:criterion}, $\{\bar{\beta}_{m1},\ldots,\bar{\beta}_{mK}\}$ indicates the set of the large-scale fading coefficients sorted in descending order.

Thus, the transmit signal from AP $m$ can be expressed as
\begin{align}~\label{eq:xcm}
   {\qx}_{c,m}=&\sum\nolimits_{k\in\Ks} \sqrt{\rho\etamkc}   \tmkZFcom x_{c,k} 
    \nonumber\\
    &\!+\!\sum\nolimits_{k\in\Kw} \sqrt{\rho\etamkc}  \tmkMRcom x_{c,k},
\end{align}
where $\etamkc$ denotes the power control coefficient between C-AP $m$ and UE $k$, $k\in\KK$; $x_{c,k}$ is the communication symbol satisfying  $\Ex\{\vert x_{c,k}\vert^2\}=1$; $ \qt^{\ZFcom}_{mk}$ and $ \qt^{\MRcom}_{mk}$ represent the ZF and MR precoding vectors, respectively, given by
\begin{align}\label{eq:Tzf}
\qt^{\ZFcom}_{mk} &= \gammk \hat{\qG}_{\Sm} \left (\hat{\qG}_{\Sm} ^{\dag} \hat{\qG}_{\Sm}\right) ^{-1} \qe_k,\\
\qt^{\MRcom}_{mk}&=\hat{\qg}_{mk},
\label{eq:Tmr}
\end{align}
where $\qe_k$ is the $k$th column of  $\qI_{\vert \Sm \vert}$ and $\mathbf{\hat{G}}_{\Sm}=[\mathbf{\hat{g}}^{\dag}_{m k}: k \in \Sm]$. Therefore, for any pair of  UEs $k$ and $k'$ belonging to the $\Sm$, we have
\begin{equation}
    \mathbf{\hat{g}}^{\dag}_{m k'}\qt^{\ZFcom}_{mk}=
    \begin{cases}
      \gammk & \text{if } k = k',\\
      0 & \text{otherwise. }
    \end{cases}
\end{equation}

The power control coefficients at AP $m$ are chosen to satisfy the power constraint at each S-AP and C-AP, respectively, i.e.,
\begin{align}~\label{eq:powerconst}
{a_{m}} \Ex\{\|\mathbf{x}_{c,m}\|^2\}+({1-a_{m}})\Ex\{\|\mathbf{x}_{r,m}\|^2\}\leq \rho.
\end{align}

For the sake of notation simplicity, we introduce $\ZK$ and $\MK$ as the set of indices of APs that transmit to UE $k$ using $\qt^{\ZFcom}_{mk}$  and $ \qt^{\MRcom}_{mk}$ respectively, given as
\begin{subequations}
 \begin{align}
\ZK&\triangleq \big\lbrace m:k \in \Sm,m=1,\ldots,M \big\rbrace,\\ 
\MK&\triangleq \big\lbrace m:k \in \Wm,m=1,\ldots,M \big\rbrace, 
\end{align}   
\end{subequations}
with $ \ZK \cap  \MK = \phi$ and  $\ZK \cup \MK = \mathcal{M}$.
The PZF precoding approach provides a sophisticated balance between ZF and MR, catering to interference suppression for the strongest UEs and signal strength maximization for the weakest ones.

The received signal at the $k$-th UE can be represented as
\vspace{0.0em}
\begin{align}\label{eq:yk}
{y}_k&=\Big( \sum\nolimits_{m\in{\ZK}}  \sqrt{a_m\rho \etamkc} \gtmk \qt^{\ZFcom}_{mk}\\
&\hspace{0em}+\sum\nolimits_{m\in{\MK}}  \sqrt{a_m\rho \etamkc} \gtmk \qt^{\MRcom}_{mk}   \Big)x_{c,k}\nonumber\\
&\hspace{0em}+\sum\nolimits_{k'\in\KK \setminus k}  \Big( \sum\nolimits_{m\in{\ZK}} \sqrt{a_m \rho \etamkpc} \gtmk \qt^{\ZFcom}_{mk'} \nonumber\\
&\hspace{0em}
+\sum\nolimits_{m\in\MK}  \sqrt{a_m\rho \etamkpc} \gtmk \qt^{\MRcom}_{mk'}   \Big)x_{c,k'}\nonumber\\
&\hspace{0em}+\!{\sum\nolimits_{m\in\mathcal{M}}\! \sum\nolimits_{l\in\LL} \!\sqrt{\!\left(1\!-\!a_m\right)\rho \etamls}\gtmk}   \qa_N(\theta_{t,ml}) x_{r,ml}\!+\!{n}_k, \nonumber
\end{align}
where the second term is the inter-user interference, the third term represents the interference from S-APs, while $n_k\sim\mathcal{CN}(0,1)$ denotes the additive white Gaussian noise at the $k$-th UE.

We assume that each UE has knowledge of the channel statistics but not of the instantaneous channels. Therefore, to derive the downlink SE at UE $k$, we utilize the so-called \textit{use-and-then-forget} bounding technique, that is widely used  in the literature for cell-free massive MIMO systems~\cite{Hien:cellfree,emil20TWC,Mohammadi:JSAC:2023}. Therefore, we rewrite~\eqref{eq:yk} as
\begin{align}\label{eq:yk:utf}
    y_k &=\!  \mathrm{DS}_k  x_{c,k} \!+\!
    \mathrm{BU}_k x_{c,k}
     \!+\!\sum\nolimits_{k'\in\KK \setminus k}
     \mathrm{IUI}_{kk'}
     x_{c,k'}
    \!+\! 
    \IR_{k}  \!+\! n_k,
\end{align}
where 
\vspace{-0.2em}
\begin{subequations}
  \begin{align}
 \mathrm{DS}_k  &\triangleq 
 \Ex\Big\{ \sum\nolimits_{m\in\ZK} \sqrt{a_m \rho \etamkc} \gtmk \qt^{\ZFcom}_{mk} \nonumber\\
 &\hspace{1em}
 +\sum\nolimits_{m\in\MK}  \sqrt{a_m\rho \etamkc} \gtmk \qt^{\MRcom}_{mk}  \Big\},
 \\
 \mathrm{BU}_k  &\triangleq  
  \Big( \sum\nolimits_{m\in\ZK} \sqrt{ a_m\rho \etamkpc} \gtmk \qt^{\ZFcom}_{mk'} \nonumber\\
  &
  \hspace{-1em}
  +\sum\nolimits_{m\in\MK}  \sqrt{a_m\rho \etamkpc} \gtmk \qt^{\MRcom}_{mk'}    \Big)- \mathrm{DS}_k,
 \\
 \mathrm{IUI}_{kk'} &\triangleq 
   \sum\nolimits_{m\in\ZK} \sqrt{ a_m\rho \etamkpc} \gtmk \qt^{\ZFcom}_{mk'} \nonumber\\
 &
 \hspace{1em}
 +\sum\nolimits_{m\in\MK}  \sqrt{a_m\rho \etamkpc} \gtmk \qt^{\MRcom}_{mk'}, 
   \\
  \IR_{k} &\!\triangleq\! \sum\nolimits\nolimits_{m\in\M} \!\sum\nolimits_{l\in\LL} \!\!
  \sqrt{(\!1\!-\!a_m)\rho\etamls }\qg_{mk}^T  \qa_N(\theta_{t,ml}) x_{r,ml},~\label{eq:IRk}  
 \end{align}  
\end{subequations}
represent the strength of the desired signal ($\mathrm{DS}_k$), 
the beamforming gain uncertainty ($\mathrm{BU}_k$), the interference caused by the $k'$-th UE ($\mathrm{IUI}_{kk'}$) and the interference caused by S-APs ($\IR_k$), respectively. By invoking~\eqref{eq:yk:utf}, an achievable downlink SE at the $k$-th UE can be expressed as
\vspace{0.1em}
\begin{align}~\label{eq:SE}
 \SEk=\Big(1-\frac{\tau_t}{\tau}\Big)\log _{2}\big(1 + \SINR_k\big),  
\end{align}
where 
$\SINR_k\!=$ 
\begin{align}~\label{eq:dLSE}
&\frac{
                 \big\vert  \mathrm{DS}_k  \big\vert^2
                 }
                 {  
                 \Ex\big\{ \!\big\vert  \mathrm{BU}_k  \big\vert^2\!\big\} \!+\!
                 \sum_{k'\in\KK \setminus k}
                  \!\Ex\big\{ \!\big\vert \mathrm{IUI}_{kk'} \big\vert^2\!\big\}
                  \! +\!  \Ex\big\{ \!\big\vert \IR_k \big\vert^2\!\big\}  \!+\!  1}\!.
\end{align}

For the simplicity of notation, let us introduce a pair of binary variables to indicate the group assignment for each downlink UE $k$ and AP $m$ in the PZF combining scheme as
\vspace{-0.1em}
\begin{align}
    \zeta_m^{\ZK}&=
    \begin{cases}
     1 & \text{if } m \in\ZK,\\
      0 & \text{otherwise}.
    \end{cases}
   \quad
    \zeta_m^{\MK}&=
    \begin{cases}
     1 & \text{if } m \in \MK,\\
      0 & \text{otherwise}.
    \end{cases}
\end{align}    
\begin{proposition}~\label{Prop:SE:PPZF}
 The SE achieved by the PZF scheme is represented by~\eqref{eq:SE}, where $\SINR_k$ is given in~\eqref{eq:dLSNIRf} at the top of the next page.    
\end{proposition}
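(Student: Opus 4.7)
The plan is to evaluate the four expectations that appear in the SINR denominator of \eqref{eq:dLSE} together with the numerator $|\mathrm{DS}_k|^2$, one term at a time, and then collect the results. Three ingredients do almost all the work: (i) the MMSE decomposition $\qg_{mk}=\hat{\qg}_{mk}+\tilde{\qg}_{mk}$ with $\hat{\qg}_{mk}\sim\mathcal{CN}(\mathbf{0},\gamma_{mk}\qI_N)$ independent of $\tilde{\qg}_{mk}\sim\mathcal{CN}(\mathbf{0},(\beta_{mk}-\gamma_{mk})\qI_N)$; (ii) the ZF identity displayed right after \eqref{eq:Tmr}, which gives $\hat{\qg}_{mk}^{\dag}\qt^{\ZFcom}_{mk'}=\gamma_{mk}\delta_{k,k'}$ whenever $k,k'\in\Sm$; and (iii) the first inverse-Wishart moment $\Ex\{\|\qt^{\ZFcom}_{mk}\|^2\}=\gamma_{mk}/(N-|\Sm|)$, which I obtain by writing $\hat{\qG}_{\Sm}=\check{\qH}_{\Sm}\Gamma_m^{1/2}$ with $\Gamma_m=\diag(\gamma_{mk})_{k\in\Sm}$ and $\check{\qH}_{\Sm}$ standard complex Gaussian, so $[(\hat{\qG}_{\Sm}^{\dag}\hat{\qG}_{\Sm})^{-1}]_{kk}$ reduces to the standard inverse-complex-Wishart diagonal mean $1/[\gamma_{mk}(N-|\Sm|)]$.

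With these tools I would process the four components in order. For the numerator, each $m\in\ZK$ contributes $\Ex\{\qg_{mk}^{\dag}\qt^{\ZFcom}_{mk}\}=\gamma_{mk}$ deterministically via (ii), while each $m\in\MK$ contributes $\Ex\{\qg_{mk}^{\dag}\hat{\qg}_{mk}\}=N\gamma_{mk}$; summing with the weights $\sqrt{a_m\rho\etamkc}$ and squaring yields $|\mathrm{DS}_k|^2$. For $\Ex\{|\mathrm{BU}_k|^2\}$, independence across $m$ turns the variance into a sum of per-AP variances: the ZF piece becomes $(\beta_{mk}-\gamma_{mk})\gamma_{mk}/(N-|\Sm|)$ from (iii) applied to $\tilde{\qg}_{mk}^{\dag}\qt^{\ZFcom}_{mk}$, and the MR piece becomes $N\gamma_{mk}\beta_{mk}$ via the standard fourth-moment computation on $\|\hat{\qg}_{mk}\|^2$ together with an orthogonal cross-term. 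The IUI expectation splits similarly: for $m\in\ZK$, if $k\in\Sm$ the orthogonality in (ii) kills the estimate part and only $(\beta_{mk}-\gamma_{mk})\gamma_{mk'}/(N-|\Sm|)$ survives, while if $k\notin\Sm$ the full Gaussian second moment $\beta_{mk}\gamma_{mk'}$ is used; for $m\in\MK$ the standard MR formula $\beta_{mk}\gamma_{mk'}$ applies. Finally, for $\IR_k$ in \eqref{eq:IRk} the sensing symbols $x_{r,ml}$ are unit-power and uncorrelated across $(m,l)$, and $\qg_{mk}$ is zero-mean independent of the deterministic steering vector $\qa_N(\theta_{t,ml})$, so each term collapses to $(1-a_m)\rho\etamls\cdot N\beta_{mk}$.

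The main obstacle is bookkeeping rather than analysis: correctly handling the case split inside the IUI term (the ZF nulling in (ii) is only active when both $k$ and $k'$ lie in $\Sm$) and then re-expressing the whole formula in terms of the network-wide indicators $\zmzk$ and $\zmmk$ so that the final expression becomes a clean sum over $m\in\M$ with these indicators acting as selectors. Once that substitution is done, the MR and ZF contributions collect into per-AP terms whose coefficients depend only on $\beta_{mk},\gamma_{mk},\gamma_{mk'},N$ and $|\Sm|$, which is exactly the structure needed for the closed form advertised as \eqref{eq:dLSNIRf}. The remaining computations reduce to standard Gaussian and Wishart expectations and do not require any new idea.
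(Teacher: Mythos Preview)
Your plan is essentially the paper's own proof in Appendix~\ref{Prop:SE:PPZF:proof}: split $\qg_{mk}=\hat{\qg}_{mk}+\tilde{\qg}_{mk}$, use the ZF identity together with the inverse-Wishart diagonal moment $\Ex\big\{[(\hat{\qG}_{\Sm}^{\dag}\hat{\qG}_{\Sm})^{-1}]_{kk}\big\}=1/[\gamma_{mk}(N-|\Sm|)]$, and evaluate $\mathrm{DS}_k$, $\mathrm{BU}_k$, $\mathrm{IUI}_{kk'}$, $\mathrm{IR}_k$ one by one before collecting via the indicators $\zmzk,\zmmk$. Two small slips to fix when you execute: (i) your IUI sub-case ``for $m\in\ZK$, if $k\in\Sm$\dots while if $k\notin\Sm$'' is vacuous, since $m\in\ZK\Leftrightarrow k\in\Sm$ by definition---the distinction you are after is whether $k'$ lies in $\Sm$ (this governs which precoder AP $m$ actually applies to $k'$); and (ii) the MR interference term should be $N\beta_{mk}\gamma_{mk'}$, not $\beta_{mk}\gamma_{mk'}$, since $\Ex\{|\qg_{mk}^{\dag}\hat{\qg}_{mk'}|^2\}=N\beta_{mk}\gamma_{mk'}$.
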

\begin{proof}
    See Appendix~\ref{Prop:SE:PPZF:proof}.
\end{proof}

\begin{Remark}
Setting $\zeta_m^{\MK}=1$ and $\zeta_m^{\ZK}=0$, $\forall m \in\MM$, makes the SINR given by~\eqref{eq:dLSNIRf} be reduced to that achieved by the MR precoding scheme. Similarly, if $\zeta_m^{\MK}=0$ and $\zeta_m^{\ZK}=1$, $\forall m \in\MM$, the SINR in~\eqref{eq:dLSNIRf} reduces to that achieved by the ZF precoding scheme, provided that $N>K$.
\end{Remark}

\begin{figure*}
  \begin{align}~\label{eq:dLSNIRf}
&\SINR_k(\mathbf{a},  \boldsymbol{\eta}^{c}, \boldsymbol{\eta}^{s})=\frac{
\rho\Big(\!\sum_{m\in\MM}  \sqrt{ a_m  \etamkc }\gammk (\zmzk + N\zmmk)\Big)^2
                 }
                 {                  
                \rho\sum_{m\in\MM}\sum_{k'\in\KK }
                \!\!\Big(
         a_m \etamkpc\gamma_{mk'} \big(\zmzk 
        \frac{ \betmk-\gammk}{N-\vert\Sm\vert} +\zmmk N\betmk\big) \Big) \!+\!  \rho\sum\nolimits_{m\in\M}\!\sum\nolimits _{l\in\LL} \!\etamls N (1-a_m) \betmk \!+\!  1
        },        
\end{align} 
\hrulefill
\vspace{-5mm}
\end{figure*}
\vspace{-1.5em}
\subsection{Sensing Operation and MASR}
Our goal for the sensing operation is to steer the beam to the target directions of interest to facilitate high-quality capturing of reflected/echo signals, thereby extracting richer sensing information. To achieve this, we focus on efficient transmit spatial beampattern design to concentrate the signal power while maintaining a low sidelobe level~\cite{Stoica:TSP:2007,Liu:TSP:2018,Liu:TSP:2020}. Consequently, multistatic sensing can be performed using our waveform design at the S-APs. The development of a sensing scheme is beyond the scope of this work and will be addressed in future research.

To facilitate our sensing design, for any given direction $\theta_{t,m\in\mathcal{M}}$, we define the sensing beampattern gain as the transmit signal power distribution at that direction~\cite{Stoica:TSP:2007,Liu:TSP:2018,Hua:TWC:2024}:
\vspace{-0.5em}
\begin{align}~\label{eq:pavet1}
P^{\ave}(\theta_{t,m\in\M})
\!=&
\Ex\bigg\{
\Big\vert
\sum\nolimits_{m\in\M} \mathbf{a}_N^\dag(\theta_{t,m\in\M})\mathbf{x}_{m}
\Big\vert^2
\bigg\}
\nonumber\\
&
\hspace{-7em}
=
\underbrace{\Ex\bigg\{
\Big\vert
\sum\nolimits_{m\in\M} \sqrt{a_{m}}\mathbf{a}_N^\dag(\theta_{t,m\in\M}) {\qx}_{c,m}
\Big\vert^2
\bigg\}}_{\triangleq P^{\ave}_{\Com}(\theta_{t,m\in\M})}
\nonumber\\
&
\hspace{-7em}
+
\underbrace{
\Ex\bigg\{
\Big\vert
\sum\nolimits_{m\in\M} \sqrt{(1-a_{m})}\mathbf{a}_N^\dag(\theta_{t,m\in\M}) {\qx}_{r,m}
\Big\vert^2
\bigg\}
}_{\triangleq P^{\ave}_{\sen}(\theta_{t,m\in\M})},
\end{align}
where the expectation is taken over the transmitted signal $\qx_m$ and the small-sale fading. Note that the second term denotes the desired power pattern for sensing, while the first term represents the power pattern for communication, which causes distortion in the sensing process.

Since the orientation of the sensing beams are different
from that of the communication beams to satisfy the sensing
field of view, we will show that $P^{\ave}_{\Com}(\theta_{t,m\in\M})$ is independent of the angles. However, since all S-APs coherently steer the sensing beams towards multiple sensing zones, the sensing pattern for a given zone $l$ experiences sidelobe levels and consists of desired power pattern and distortion power pattern related to other sensing zones. Therefore, for a target angle $\theta_{t,ml}$, we  have
\begin{align}
    P^{\ave}_{\sen}(\theta_{t,ml}) = P^{\ave}_{\sen, \DS}(\theta_{t,ml})
    +P^{\ave}_{\sen, \DST}(\theta_{t,ml}),
\end{align}
where $P^{\ave}_{\sen, \DS}(\theta_{t,ml})$ is the desired sensing pattern and $P^{\ave}_{\sen,\DST}(\theta_{t,ml})$ denotes the distortion sensing pattern.

We now define the MASR metric to evaluate the performance of the sensing task. The MASR metric shares similarities with the sensing beampattern gain metric at a target direction $\theta_{t,ml}$~\cite{Stoica:TSP:2007,Hua:TWC:2024,Ren:TCOM:2023} or sensing SINR at the $l$-th target~\cite{Zhenyao:JSAC:2023}. The MASR characterizes the illumination power achieved at the location of the target of interest. Compared to the conventional ISAC systems in~\cite{Stoica:TSP:2007,Hua:TWC:2024,Ren:TCOM:2023}, in our proposed design, communication signals are not incorporated into the sensing function and may cause distortions in the sensing beampattern gain. To address this, we isolate the distortion component and define MASR as the ratio of the main sensing beampattern lobe to the distortion lobes. 

It is desirable to have the  average sensing power pattern distortion as well as communication distortion, i.e.,  $P^{\ave}_{\Com}(\theta_{t,m\in\M})$ as small as possible to confine the pattern distortion. For illuminating a target angle $\theta_{t,ml}$, it is desirable that $P^{\ave}_{\sen, \DS}(\theta_{t,ml})$ is higher than the sum of $P^{\ave}_{\sen, \DST}(\theta_{t,ml})$ and $P^{\ave}_{\Com}(\theta_{t,m\in\M})$ by a certain minimum sensing level $\kappa$, which is referred to as the $\mathrm{MASR}_l$~\cite{Liao:TWC:2024}, i.e.,
\vspace{0.3em}
\begin{align}~\label{eq:MASRl}
     \mathrm{MASR}_l \triangleq \frac{P^{\ave}_{\sen, \DS}(\theta_{t,ml})}{P^{\ave}_{\Com}(\theta_{t,m})+ P^{\ave}_{\sen \DST}(\theta_{t,ml})}.
\end{align}

Now, we proceed to derive explicit expressions of different power pattern terms in~\eqref{eq:MASRl}. By using~\eqref{eq:xcm} and considering that $\tmkZFcom$ and $\tmkMRcom$ are independent, we have
\vspace{0.3em}
\begin{align}
    &P^{\ave}_{\Com}(\theta_{t,m})&\nonumber\\
&=\!\sum\nolimits_{k\in\KK} \!\Big(\!\sum\nolimits_{m\in\ZK}
\!\!
    {a_m\rho\etamkc}  \Ex\Big\{\Big\vert\qa_N^\dag(\theta_{t,m})\tmkZFcom\Big\vert^2\Big\} 
    \nonumber\\
    &\hspace{2em}+\sum\nolimits_{m\in\MK}
    \!\!\!
   {a_m\rho\etamkc}  
   \Ex\bigg\{\Big\vert\qa_N^\dag(\theta_{t,m})\tmkMRcom\Big\vert^2 \!\Big\}\!\Big).
\end{align}
By invoking Lemma~\ref{lemma:wPZF} in Appendix~\ref{Apx:wPZF}, we obtain
\begin{align}~\label{eq:Pavcom}
P^{\ave}_{\Com}(\theta_{t,m})
&=\rho\sum\nolimits_{m\in\MM} 
       \sum\nolimits_{k\in\KK}
\!\!a_{m}
\etamkc \gammk\nu_{mk},
\end{align}
where $\nu_{mk}\triangleq\frac{\dkks}{N-\vert \Sm \vert}\!+N\dkkw$, while
   \begin{align}
    \delta_k^{\Ks}=
    \begin{cases}
     1 & \text{if } k \in\Ks,\\
      0 & \text{otherwise}.
    \end{cases}
    \quad \delta_k^{\Kw}=
    \begin{cases}
     1 & \text{if } k \in \Kw,\\
      0 & \text{otherwise},
    \end{cases}
\end{align} 
denote the binary variables to indicate the group assignment for each AP $m$ and downlink UE $k$.

Moreover, $P^{\ave}_{\sen}(\theta_{t,ml})$ in~\eqref{eq:pavet1} can be obtained as
\begin{align}~\label{eq:T2}
    &P^{\ave}_{\sen}(\theta_{t,ml})=\nonumber\\
    &\hspace{0.0em}
\rho\!\sum\nolimits_{m\in\M}\!\sum\nolimits _{l'\in\LL}\! (1\!-\!a_{m})\etamlps \big\vert \mathbf{a}_N^\dag(\theta_{t,ml})\qa_N(\theta_{t,ml'}) \big|^2\!.
\end{align}

We notice that $P^{\ave}_{\Com}(\theta_{t,m})$ is independent of angles.  Now, we consider the sensing performance associated with target $l$, i.e., at angle $\theta_{t,ml}$ for all $m\in\M$. By using~\eqref{eq:T2}, $ P^{\ave}_{\sen}(\theta_{t,ml})$ can be expressed as
\begin{align}~\label{eq:Pavsen}
    &P^{\ave}_{\sen}(\theta_{t,ml})=
\underbrace{\rho N^2\sum\nolimits_{m\in\M} (1-a_{m})\eta^{s}_{ml}}_{P^{\ave}_{\sen, \DS}(\theta_{t,ml})} 
\\
&\hspace{-1em}+\underbrace{\rho\sum\nolimits_{m\in\M}\sum\nolimits _{l'\in\LL\setminus l}
(1-a_{m})\etamlps \big\vert \mathbf{a}_N^\dag(\theta_{t,ml})\qa_N(\theta_{t,ml'}) \big|^2}_{P^{\ave}_{\sen, \DST}(\theta_{t,ml})},\nonumber
\end{align}
where the first term denotes the desired power pattern for sensing at the angles related to target $l$ and the second term is the average sensing power pattern distortion related to the other sensing zones.

Accordingly, by substituting~\eqref{eq:Pavcom} and~\eqref{eq:Pavsen} into~\eqref{eq:MASRl}, and then incorporating the path loss coefficient between the APs and target zone $l$,  the $\mathrm{MASR}_l$ for the $l$th  sensing zone is defined by~\eqref{eq:MASR1} at the top of the next page. Note that, in general, in the MASR, we need to choose angles that maximize the average sensing power pattern distortion. However, this leads to an intractable MASR form. To alleviate such difficulty and make the MASR more amenable to further design, we choose $\theta_{t,ml}$ for computing the sensing power pattern distortion in MASR. This is reasonable since we assume that we know the sensing zone $l$, and thus, we only need to search some directions around the angles of target $l$.

\begin{figure*}
   \begin{align}~\label{eq:MASR1}
   \mathrm{MASR}_l(\mathbf{a},  \boldsymbol{\eta}^{c}, \boldsymbol{\eta}^{s}) &\!=\! 
  \frac{\sum\nolimits_{m\in\M}
(1-a_{m})N^2\eta^{s}_{ml}}
   {\sum\nolimits_{m\in\M} 
   \sum\nolimits_{k\in\KK}a_m \etamkc\gammk
   \nu_{mk}
\!+ \!
\sum\nolimits_{m\in\M}\sum\nolimits _{l'\in\LL\setminus l} (1\!-\!a_{m})\etamlps \big\vert \qa_N^\dag(\theta_{t,ml})\qa_N^\dag(\theta_{t,ml'}) \big|^2 }.
\end{align}
\hrulefill
\vspace{-5mm}
\end{figure*}

\vspace{-0.5em}
\section{Asymptotic Analysis }~\label{sec:Asy}
The system comprises a total of $M$ APs. The APs are divided into two categories: $M_c$ APs designated for communication activities and $M_s$ APs dedicated to sensing tasks, ensuring the total AP count remains as $M=M_c+M_s$.
With the application of $\MR$ precoding across all communication APs,  we can pursue a detailed assessment of $\MR$ precoding's impact on the effectiveness of CF-mMIMO ISAC systems. In addition, we assume that equal power control is applied at the  C-APs and S-APs, where
 \vspace{-0.1em}
\begin{align}\label{eq:eta}
\bar{\eta}^{c}_{mk}=\frac{1}{N \sum_{k'\in\KK}\gamma_{mk'}}, ~ \forall k,\quad
\bar{\eta}^{s}_{ml}=\frac{1}{N L}  ~ \forall l.
\end{align}

We analyze the performance of two case studies: \textit{i)}  $M$ is large and $N$ is fixed, and; \textit{ii)}  $N$ is large and $M$ is fixed. 

\vspace{-0.5em}
\subsection{Case I ($M_s, M_c\to\infty$  and $N$ is fixed)}
In this section, we explore a scenario where both the numbers of APs designated for $M_s$  and $M_c$  approach infinity. Importantly, during this expansion, we maintain a constant ratio between $M_c$ and $M_s$, symbolized as $c = \frac{M_c}{M_s}$. This examination sheds light on the system's asymptotic performance characteristics, when both the sensing and communication resources are scaled infinitely but proportionally.
A critical aspect of this scenario involves the transmit power scaling of each AP. As $M_c$ becomes very large, the transmit power per AP is adjusted according to $\rho=\frac{E}{M_c^2}$, where $E$ is fixed.  For the sake of notation simplicity, define $\bar{E}\triangleq \frac{a_m E}{ \sumgammk}$. By applying~\eqref{eq:eta}~into~\eqref{eq:yk:utf}, we have
 \begin{align}\label{eq:Myk}
&y_k= 
\sum\nolimits_{m\in{\MM}}  \sqrt{ \frac{\bar{E}}{ N }} \frac{1}{M_c}\gtmk \qt^{\MRcom}_{mk}  x_{c,k}
\\
&\hspace{1em}
+\!\!\sum\nolimits_{k'\in\KK \setminus k}
\!\sum\nolimits_{m\in \MM}
\!\!\sqrt{\! \frac{\bar{E}}{N }}\frac{1}{M_c} \gtmk \qt^{\MRcom}_{mk'}    \!x_{c,k'}
\nonumber\\
&\hspace{1em}
+\!\!\sum\nolimits_{m\in\mathcal{M}} \sum\nolimits_{l\in\LL} \sqrt{\frac{\left(1-a_m\right)E}{NL} }\frac{1}{cM_s}\gtmk\tmlsen x_{r}+{n}_k.
\nonumber
\end{align}
 
Then, by applying Tchebyshev’s theorem we have\footnote{Let $X_1,\ldots,X_n$ be independent RVs, such that~$\Ex\Big\{X_i\}=\bar{x}_i $ and $\text{Var} \leq c \le \infty. $ Then, Tchebyshev's theorem states $ \frac{1}{n} \sum_{n'=1}^{n} X_{n'} - \frac{1}{n} \sum_{n'} \bar{x}_{n'} \xrightarrow[{n\to \infty }]{P} 0$~\cite{cramer2004random}. }  
  \begin{align}~\label{eq:asym1}
&\sum\nolimits_{m\in{\MM}}  \sqrt{ \frac{\bar{E}}{N }} \frac{1}{M_c}\gtmk \qt^{\MRcom}_{mk}\nonumber\\
&-  \sum\nolimits_{m\in\M}  \sqrt{ N \bar{E}} \frac{\gammk}{M_c} \xrightarrow[{M_c \to \infty }]{P}0.
\end{align}  

Now, we turn our attention on the interference terms. For inter-user interference, by noting that the zero-mean channel vector $\qg_{m k}$ is independent of $\qt^{\MRcom}_{mk'}$ for $k'\neq k$,  we have 
\vspace{-0.0em}
  \begin{align}~\label{eq:asym2}
\sum\nolimits_{m\in \MM} \sqrt{ \frac{ \bar{E}}{ N }} \frac{1}{M_c}\gtmk \qt^{\MRcom}_{mk'}\xrightarrow[{M_c\to \infty }]{P}0.
\end{align}  

Moreover, for the deterministic $\tmlsen$, we have 
\begin{align}~\label{eq:asym3}
   \sum\nolimits_{m\in\MM} \sum\nolimits_{l\in\LL} \sqrt{\frac{\left(1-a_m\right)E}{ NL} }\frac{1}{cM_s}\gtmk\tmlsen \xrightarrow[{M_s \to \infty }]{P}\!0.
\end{align}

Accordingly, by using~\eqref{eq:asym1},~\eqref{eq:asym2}, and~\eqref{eq:asym3},  we have
\begin{align}\label{eq:M}
{y_k} -\bigg( \!\! \sum\nolimits_{m\in\M} \! \sqrt{ N \bar{E}} \frac{\gammk}{M_c} \bigg)x_{c,k}-n_k
\xrightarrow[{M_c,M_s \to \infty }]{P}\!0.
\end{align}
The result in~\eqref{eq:M} indicates that when both $M_c$  and $M_s$   approach infinity, the received signal contains the desired signal plus noise. This indicates that  inter-user interference and interference caused by S-APs fade, enhancing the signal quality. Most importantly, as $M$ grows large, while the ratio of $M_c$  and $M_s$ is kept fixed, the transmit power at each AP  can be made inversely proportional to the square of the number of APs with no degradation in performance.

In practical circumstances, although $M_c$ will not be infinite, our asymptotic analysis offers valuable insights into the system's scalability and energy efficiency as $M_c$ increases. It suggests that by deploying a large number of APs, the per-AP power can be significantly reduced while still yielding the desired communication performance, potentially leading to substantial power savings.

\vspace{-1em}
\subsection{Case II ($N\to\infty$ and $M_c$ and $M_s$ are fixed)}
We explore the scenario where $N$ approaches infinity, while $M_c$ and $M_s$ remain constant. The scaling of the transmit power for each AP is adjusted according to $\rho = \frac{E}{N}$, where $E$ remains a fixed value. 

By implementing the power scaling rule $\rho = \frac{E}{N}$,  considering ~\eqref{eq:yk:utf} and substituting $\qg_{mk}=(\hat{\qg}_{m k} +\tilde{\qg}_{m k})$, the desired received signal at the UE can be characterized as follows
 \begin{align}\label{eq:lyk}
 &\sum\nolimits_{m\in{\MM}}\!\! \frac{ \sqrt{ \bar{E}}}{N}(\hat{\qg}_{m k} \!+\!\tilde{\qg}_{m k})^{\dag}\mathbf{\hat{g}}_{mk} 
 \!\buildrel a.s. \over\longrightarrow \!\sum\nolimits_{m\in{\MM}} \sqrt{\bar{E}}\gammk,
\end{align}
where we have used the results on very long random vectors~\cite{Ngo:TCOM:2013}.

For the interference terms, noticing that $\gtmk$ and $\qt^{\MRcom}_{mk'}$ are independent zero-mean vectors,  we have
\begin{subequations}~\label{eq:IUI}
\begin{align}
 &\sum\nolimits_{m\in \MM}  \sqrt{ \bar{E}} \frac{1}{N}\gtmk \qt^{\MRcom}_{mk'} \buildrel a.s. \over\longrightarrow 0,
 \\
 &\sum\nolimits_{m\in\mathcal{M}} \sum\nolimits_{l\in\LL} \sqrt{\frac{\left(1-a_m\right)E}{L} }\frac{1}{N}\gtmk\tmlsen \buildrel a.s. \over\longrightarrow 0.
\end{align}
\end{subequations}

Then, by using~\eqref{eq:lyk} and~\eqref{eq:IUI} we have
\begin{align}
y_k&-\sum\nolimits_{m\in{\MM}}\sqrt{\bar{E} }\gammk x_{c,k} 
- n_k \buildrel a.s. \over\longrightarrow0,~\text{as}~N \rightarrow \infty,
\end{align}
where $\buildrel a.s. \over \rightarrow$ denotes almost sure convergence. Akin to colocated mMIMO systems, a substantial power gain (i.e. $\rho = \frac{E}{N}$) can be achieved, and the channels between the UEs and APs tend toward orthogonality as $N$ grows large ($ N\to\infty$), elucidating the profound impact of deploying an extensive number of antennas in the CF-mMIMO ISAC space. This tendency towards channel orthogonality inherently mitigates the interference and enhances the signal quality, thereby reinforcing the premise that increasing the antenna count per AP not only avails of the advantages seen in traditional mMIMO architectures, but also significantly boosts the overall system performance and SE within the CF-mMIMO ISAC paradigm.

\section{Proposed Design Problems and Solution}~\label{sec:power}
In this extended section, we formulate and solve the problem of joint AP mode selection and power allocation to provide fairness across the UEs, subject to quality of service (QoS) requirements of the sensing zone. More specifically, we aim to optimize the AP operation mode selection vector ($\qa$) and power control coefficients ($\ETAC_{mk}, \ETAS_{ml}$) to maximize the minimum per-UE SE subject to a prescribed MASR level for the target detection and transmit power constraints at the APs.  

In order to further simplify the optimization problem, we propose a greedy-based algorithm for AP mode selection and power control design. More specifically,  the joint optimization problem is decomposed into two sub-problems: 1) AP mode selection, which is performed via a greedy algorithm, and 2) power control design for fixed AP modes.

Note that AP $m$ is required to meet the average normalized
power constraint, i.e., $\Ex\left\{\|\qx_{c,m}\|^2\right\}\leq \rho$. By invoking~\eqref{eq:powerconst}, and noticing that $\Ex\Big\{\big\Vert \qt^{\ZFcom}_{mk}\big\Vert^2\Big\} =\frac{\gammk}{N-\vert\Sm\vert}$, we have the following per-AP power constraint
\vspace{0.4em}
\begin{align}
\label{downlink:power:cons}
&\sum\nolimits_{k\in\Sm} \eta^{c}_{mk}\frac{\gammk}{N-\vert\Sm\vert} +
\sum\nolimits_{k\in\Wm} N\gammk \eta^{c}_{mk}  \leq a_m.
\end{align}

\subsection{Joint AP Mode Selection and Power Allocation Design}
In this subsection, we formulate and solve the problem of joint AP mode selection and power control design. Define $\mathbf{a} \triangleq \{a_1,\ldots,a_M\}$, $\ETAC \triangleq\{\eta^c_{m1},\ldots,\eta^{c}_{mK}\}$ and $\ETAS\triangleq\{\eta_{m1}^s,\ldots,\eta^{s}_{mL}\}$ for all $m\in\M$. The optimization problem is then formulated as\footnote{ In  sensing-priority applications, maximizing the beampattern gain and minimizing the mean-squared cross-correlation pattern are interesting directions for future research~\cite{Liu:TSP:2020}.}
\vspace{0.4em}
\begin{subequations}\label{P:Extended_PZF}
    \begin{align}
        \text{\textbf{(P1):}}~\underset{\mathbf{a}, \ETAC, \ETAS}{\max}  \min_{k\in\mathcal{K}} \hspace{1em}\, &\SINR_k (\mathbf{a}, \ETAC, \ETAS) \label{eq:max_min_snr}\\
        \mathrm{s.t.} \hspace{1em}\, & \mathrm{MASR}_l(\mathbf{a}, \ETAC, \ETAS) \geq\kappa,~\forall l\in\LL, \label{eq:Extended_PZF:ct1}\\
         &\sum\nolimits_{k\in\KK} \eta^{c}_{mk}\gammk\nu_{mk} \leq a_m, \forall m\in\mathcal{M}, \label{eq:etamk:ct31}\\
        &\sum\nolimits_{l\in\LL}\!\eta^{s}_{ml}\! \leq\! \frac{ (1 \!- \!a_m)}{N}, \forall m\in\mathcal{M}, \label{eq:etam:ct31}\\
        & a_m \in \{ 0,1\}, \label{eq:am:binary_strategies1}
    \end{align}
\end{subequations}
where the first constraint~\eqref{eq:Extended_PZF:ct1} is to explicitly control the level of the MASR for all sensing zones, while the second and third constraint control the transmit power at the C-APs and S-APs, respectively. 

Before proceeding, we introduce an auxiliary variable $t =  \min_{k\in\KK} \SINR_k (\mathbf{a}, \ETAC, \ETAS) $. We further relax the binary constraint~\eqref{eq:am:binary_strategies1}. To address the integer binary constraint, it is straightforward to see that $a_m\in\{0,1\}$ is equivalent to $a_m=a_m^2$, $a_m\in[0,1]$. On the other hand, it holds true that $a_m^2\leq a_m$, for $a_m\in[0,1]$. Following~\cite{Che:TWC:2014}, we relax the binary $a_m$ to $a_m\in[0,1]$ and introduce a penalty term with parameter $\lambda$ in the objective function to enforce $a_m=a_m^2$, thus making $a_m$ binary. The penalty parameter $\lambda$ signifies the relative importance of recovering binary values for $\qa$ over fairness maximization~\cite{Che:TWC:2014}.

We notice that since at the optimal point $a_m \in \{ 0,1\}$, we can replace $a_m$ with $a_m^2$ in~\eqref{eq:etamk:ct31} and~\eqref{eq:etam:ct31} to accelerate the convergence speed of the optimization problem. To make the objective function and new constraint convex, we employ SCA and apply the following inequality
\vspace{0.4em}
\begin{equation}~\label{eq:x2}
x^{2}\geq x_{0}(2x-x_{0}).
\end{equation}

Accordingly, the optimization problem \eqref{P:Extended_PZF}  is reformulated as follows:
\vspace{0.4em}
\begin{subequations}\label{P:Extended_t}
    \begin{align}
        \text{\textbf{(P2):}}~\underset{\qa, \ETAC, \ETAS, t}{\max} \hspace{1em}& t\!-\!\lambda \sum\nolimits_{m\in\MM}\!\!a_m-a_m^{(n)}\Big(2a_m\!-\!a_m^{(n)}\Big) \\
        \mathrm{s.t.} \hspace{1em}\,  
        & \SINR_k (\mathbf{a}, \ETAC, \ETAS)\geq t,~  
                 \forall k\in\KK, \label{P:Extended_tb}
        \\ & \mathrm{MASR}_l(\mathbf{a}, \ETAC, \ETAS) \geq\kappa,~\forall l\in\LL, \label{P:Extended_tc}\\
        &\hspace{0em}\sum\nolimits_{k\in\KK} \eta^{c}_{mk}\gammk\nu_{mk}
        \nonumber\\
        &\hspace{2em}
        \leq {a_m^{(n)}\Big(2a_m\!-\!a_m^{(n)}\Big)}, \forall m\in\mathcal{M}, \label{eq:etamk:ct3}\\
        & \sum_{l\in\LL}\eta^{s}_{ml} \leq \frac{ (1 - a_m^2)}{N},~\forall m\in\mathcal{M}, \label{eq:etam:ct3}\\
        & 0\leq a_m \leq 1,~\forall m\in\mathcal{M}. \label{eq:am:binary_strategies}
    \end{align}
\end{subequations}

The non-convex nature of constraints \eqref{P:Extended_tb} and \eqref{P:Extended_tc} make the resulting problem  non-convex. To address the non-convexity, we employ the method of SCA. To deal with the non-convexity of~\eqref{P:Extended_tb}, we first define 
\vspace{0.4em}
\begin{subequations}
  \begin{align}
 \varrho_{mk}&\triangleq \zmzk\frac{(\betmk-\gammk)}{N-\vert\Sm\vert}+\zmmk N\betmk,\\
 f_{mk}&\triangleq \zmzk + N\zmmk.
\end{align}  
\end{subequations}
Now, constraint~\eqref{P:Extended_tb} can be expressed as
\vspace{0.4em}
\begin{align}~\label{eq:dLSNIRconstraint}
&\frac{
\Big(\sum_{m\in\MM}  \sqrt{ a_m  \rho\etamkc} \gammk f_{mk} \Big)^2
                 }
                 {t} \geq
                 \nonumber\\
                 &                  
   \rho \sum\nolimits_{m\in\MM} a_m\Big(\sum\nolimits_{k'\in\KK }  \etamkpc\gamma_{mk'}  
    \varrho_{mk}  
    - \sum\nolimits _{l\in\LL} \etamls N  \betmk \Big)\nonumber\\
&
    +  \rho\sum\nolimits_{m\in\M}\sum\nolimits _{l\in\LL} \etamls N  \betmk +1.
    \end{align}
For ease of description, let us denote   
\vspace{0.2em}
\begin{align}
    \mu_{mk}&\triangleq\sum\nolimits_{k'\in\KK }  \etamkpc\gamma_{mk'}  \varrho_{mk} - \sum\nolimits _{l\in\LL} \etamls N  \betmk.
\end{align}
To this end, ~\eqref{eq:dLSNIRconstraint} is equivalent to
\vspace{0.2em}
\begin{align}~\label{eq:dLSNIRconstraint2}
&\frac{
\Big(2\sum_{m\in\MM}  \sqrt{ a_m  \etamkc} \gammk f_{mk} \Big)^2
                 }
                 {t} \!+\!\sum\nolimits_{m\in\MM} (a_m\!-\!\mu_{mk})^2\nonumber\\
                 &\hspace{0em} 
                 \!\geq\!
                 \sum\nolimits_{m\in\MM} (a_m\!+\!\mu_{mk})^2\!\!
     +\!  4\sum\nolimits_{m\in\M}\sum\nolimits _{l\in\LL} \etamls N  \betmk \!+\!4/\rho.
    \end{align}
It is clear that we need to find a concave lower bound of
the left-hand side of the above inequality.  To this end, we note that the function $x^{2}/y$ is convex for $y>0$, and thus, the following inequality holds
\vspace{-0.3em}
\begin{align}~\label{eq:x2/y}
\frac{x^{2}}{y}\geq
\frac{x_{0}}{y_{0}}\bigl(2x-\frac{x_{0}}{y_{0}}y\bigr),
\end{align}
which is obtained by linearizing $x^{2}/y$ around $x_{0}$ and $y_{0}$. Let us now define 
\begin{align}
   q_k^{(n)}\triangleq \frac{2\sum_{m\in\MM}  \sqrt{ a_m^{(n)}  (\etamkc)^{(n)}} \gammk f_{mk}} {t^{(n)}}. 
\end{align}
From the inequality~\eqref{eq:x2/y}, we can recast~\eqref{eq:dLSNIRconstraint2} as
\begin{align}~\label{eq:dLSNIRcf}
&     q_k^{(n)}\Big(4\sum\nolimits_{m\in\MM}  \sqrt{ a_m  \etamkc} \gammk f_{mk}\!-\! q_k^{(n)} t \Big)
\nonumber\\
                   &+\!\sum\nolimits_{m\in\MM}\! \big(a_m^{(n)}\!-\!\mu_{mk}^{(n)}\big)\Big(2\big(a_m-\mu_{mk}\big)-\big(a_m^{(n)}-\mu_{mk}^{(n)}\big)\Big)
                   \nonumber\\
                   &\geq
                  \sum\nolimits_{m\in\MM} (a_m\!+\!\mu_{mk})^2
        +\!
        4\!\sum\nolimits_{m\in\M}\!\sum\nolimits _{l\in\LL}\! \etamls N  \betmk \!+\! 4/\rho,
    \end{align}
where we have used~\eqref{eq:x2} and replaced $x$ and $x_{0}$ by $a_m-\mu_{mk}$ and $a_m^{(n)}-\mu_{mk}^{(n)}$,
respectively.

Now, we focus on~\eqref{P:Extended_tc}, which can be expressed as
\vspace{0.1em}
\begin{align}~\label{eq:MASR2}
& {\sum\nolimits_{m\in\M}
N^2\eta^{s}_{ml}}
\geq
\nonumber\\
&
\sum\nolimits_{m\in\M}
a_{m} \Big(N^2\eta^{s}_{ml}
 +  \kappa 
   \sum\nolimits_{k\in\KK}\etamkc\gammk
   \nu_{mk}
   \nonumber\\
   &
\!- \!
\kappa\sum\nolimits _{l'\in\LL\setminus l}\etamlps \big\vert \mathbf{a}_N^\dag(\theta_{t,ml})\qa_N(\theta_{t,ml'}) \big|^2\Big)
   \nonumber\\
   &
\!+ \!
\kappa\sum\nolimits_{m\in\M}\sum\nolimits _{l'\in\LL\setminus l} \etamlps \big\vert \qa_N^\dag(\theta_{t,ml})\qa_N(\theta_{t,ml'}) \big|^2.
\end{align}
For ease of description let us denote
\begin{align}
    \omega_{ml}&\triangleq 
    \Big(N^2\eta^{s}_{ml}
 +  \kappa 
   \sum\nolimits_{k\in\KK}\etamkc\gammk
   \nu_{mk} \nonumber\\
   &
\!- \!
\kappa\sum\nolimits _{l'\in\LL\setminus l}\etamlps \big\vert \mathbf{a}_N^\dag(\theta_{t,ml})\qa_N(\theta_{t,ml'}) \big|^2\Big).
\end{align}
Accordingly,~\eqref{eq:MASR2} can be written as
\begin{align}~\label{eq:MASR3}
& 4\sum\nolimits_{m\in\M}
N^2\eta^{s}_{ml}
+
\sum\nolimits_{m\in\M}
(a_{m} -\omega_{ml})^2
\geq
\nonumber\\
&
4\kappa\sum\nolimits_{m\in\M}\sum\nolimits _{l'\in\LL\setminus l} \etamlps \big\vert \mathbf{a}_N^\dag(\theta_{t,ml})\qa_N(\theta_{t,ml'}) \big|^2\nonumber\\
&+\sum\nolimits_{m\in\M}
(a_{m} +\omega_{ml})^2.
\end{align}
To this end, by using~\eqref{eq:x2}, we obtain the concave lower bound of the left-hand side of the above inequality. Then, we get
\vspace{0.1em}
\begin{align}~\label{eq:MASR4}
& 4\sum\nolimits_{m\in\M}
N^2\eta^{s}_{ml}
+
\sum\nolimits_{m\in\M}
\big(a_{m}^{(n)} -\omega_{ml}^{(n)}\big) \nonumber\\
&\times\!\Big( 2(a_{m}\!-\!\omega_{ml}) \!-\! (a_{m}^{(n)} -\omega_{ml}^{(n)})\Big)
\!\geq\!
\sum\nolimits_{m\in\M}
(a_{m} +\omega_{ml})^2
   \nonumber\\
   &\hspace{0em}
\!+ \!
4\kappa\sum\nolimits_{m\in\M}\sum\nolimits _{l'\in\LL\setminus l} \etamlps \big\vert \mathbf{a}_N^\dag(\theta_{t,ml})\qa_N(\theta_{t,ml'}) \big|^2.
\end{align} 

Now, the convex optimization problem is given as~\eqref{JAPpower_final}, at the top of the next page. Problem (\textbf{P3})  is a convex optimization problem and can be efficiently solved using CVX~\cite{cvx}. In \textbf{Algorithm~\ref{alg1}}, we outline main steps to solve problem (\textbf{P3}), where $\widetilde{\qx} \triangleq \{\qa, \ETAC, \ETAS\}$ and $\widehat{\mathcal{F}} \triangleq\{\eqref{P4:a},~\eqref{P4:b},~\eqref{P4:c},~\eqref{P4:d},~\eqref{P4:e}\}$  is a convex feasible set. Starting from a random point $\widetilde{\qx}\in\widehat{\mathcal{F}}$, we solve \eqref{JAPpower_final} to obtain its optimized solution $\widetilde{\qx}^*$, and use $\widetilde{\qx}^*$ as an initial point in the next iteration. The proof of this convergence property uses similar steps as  the proof of \cite[Proposition 2]{vu18TCOM}, and hence, is omitted herein due to lack of space.
\begin{figure*}[btp]
  \begin{subequations}\label{JAPpower_final}
    \begin{align}
        \text{\textbf{(P3):}}~\underset{\qa, \ETAC, \ETAS, t}{\max} \hspace{1em}& t-\lambda \sum\nolimits_{m\in\MM}a_m-a_m^{(n)}\Big(2a_m\!-\!a_m^{(n)}\Big) \\
        \mathrm{s.t.} \hspace{1em}\, 
&     q_k^{(n)}\Big(4\sum\nolimits_{m\in\MM}  \sqrt{ a_m  \etamkc} \gammk f_{mk}\!-\! q_k^{(n)} t \Big)
                   \!+\!\sum\nolimits_{m\in\MM}\! \big(a_m^{(n)}\!-\!\mu_{mk}^{(n)}\big)\Big(2\big(a_m-\mu_{mk}\big)-\big(a_m^{(n)}-\mu_{mk}^{(n)}\big)\Big)\nonumber\\
                   & \hspace{4em}
                   \geq
                  \sum\nolimits_{m\in\MM} (a_m+\mu_{mk})^2
        +4\sum\nolimits_{m\in\M}\sum\nolimits _{l\in\LL} \etamls N  \betmk + 4/\rho,~\label{P4:a}\\
& 4\sum\nolimits_{m\in\M}
N^2\eta^{s}_{ml}
+
\sum\nolimits_{m\in\M}
\big(a_{m}^{(n)} -\omega_{ml}^{(n)}\big) \Big( 2(a_{m}-\omega_{ml}) - (a_{m}^{(n)} -\omega_{ml}^{(n)})\Big)
   \nonumber\\
   &\hspace{4em}
\geq
\sum\nolimits_{m\in\M}
(a_{m} +\omega_{ml})^2
\!+ \!
4\kappa\sum\nolimits_{m\in\M}\sum\nolimits _{l'\in\LL\setminus l} \etamlps \big\vert \mathbf{a}_N^\dag(\theta_{t,ml})\qa_N(\theta_{t,ml'}) \big|^2,~\label{P4:b}\\
&\sum\nolimits_{k\in\KK} \eta^{c}_{mk}\gammk\nu_{mk} \leq {a_m^{(n)}\Big(2a_m\!-\!a_m^{(n)}\Big)}, \forall m\in\mathcal{M}, ~\label{P4:c}\\
        & \sum_{l\in\LL}\eta^{s}_{ml} \leq \frac{ (1 - a_m^2)}{N}, \forall m\in\mathcal{M}, ~\label{P4:d}\\
        & 0\leq a_m \leq 1. ~\label{P4:e}
    \end{align}
\end{subequations}
\hrulefill
\vspace{-1.5em}
\end{figure*}

\textit{Complexity of Algorithm~\ref{alg1}:}
\textbf{Algorithm~\ref{alg1}} requires solving a series of convex problems \eqref{JAPpower_final}. For ease of presentation, if we let $K = L$, problem \eqref{JAPpower_final}  can be transformed to an equivalent problem that involves $A_v\triangleq (M + 2MK +1)$ real-valued scalar variables, $A_l\triangleq 3M$ linear constraints, $A_q\triangleq 2K $ quadratic constraints. Therefore, the algorithm for solving problem \eqref{JAPpower_final} requires a complexity of $\mathcal{O}(\sqrt{A_l+A_q}(A_v+A_l+A_q)A_v^2)$ in each iteration~\cite{tam16TWC,Mohammadi:JSAC:2023}. In Section~\ref{sec:num}, we will show that this algorithm converges to the optimized solution after a few iterations.

\subsection{Low-Complexity Design}
In this subsection, to reduce the complexity of the joint optimization problem while maintaining acceptable system performance, we propose a low-complexity design. To achieve this, we decompose the main optimization problem into two disjoint sub-problems: 1) AP mode assignment and 2) AP power allocation. In the first phase, AP mode selection is performed using a greedy algorithm under the assumption of equal power allocation among the APs. In the subsequent phase, the power allocation problem is solved for the given AP modes to optimize the power control coefficients at the C-APs and S-APs.

\begin{algorithm}[!t]
\caption{Proposed algorithm for joint AP mode selection and power allocation design (\textbf{JAP-OPA})}
\begin{algorithmic}[1]
\label{alg1}
\STATE \textbf{Initialize}: $n\!=\!0$, 
$\lambda > 1$, a random initial poin $\widetilde{\qx}^{(0)}\!\in\!\widehat{\mathcal{F}}$.
\REPEAT
\STATE Update $n=n+1$
\STATE Solve \eqref{JAPpower_final} to obtain its optimized solution $\widetilde{\qx}^*$
\STATE Update $\widetilde{\qx}^{(n)}=\widetilde{\qx}^*$
\UNTIL{convergence}
\end{algorithmic}
\end{algorithm}
\setlength{\textfloatsep}{0.4cm}

\subsubsection{Greedy AP Mode Selection}  Let $\mathcal{A}_{\sen}$ and $\mathcal{A}_{\Com}$ denote the sets containing the indices of APs operating as radar, i.e., APs with $a_m=0$, and of APs operating in communication mode, i.e., APs with $a_m=1$, respectively.  In addition, $\mathrm{MASR}_l(\mathcal{A}_{\sen}, \mathcal{A}_{\Com})$ and $\SINR_k(\mathcal{A}_{\sen}, \mathcal{A}_{\Com})$ underline the dependence of the sensing MASR (of the $l$-th zone)  and received SINR (of the $k$-th UE) on the different choices of AP mode selection.  Our greedy algorithm of AP mode selection is shown in \textbf{Algorithm~\ref{alg:Grreedy}}. In this algorithm $\Pi_m= \min_{k\in\KK}\SINR_k( \mathcal{A}_{s},\mathcal{A}_{\Com} \bigcup m)$ represents the minimum SINR among the UEs, calculated using~\eqref{eq:dLSNIRf} under equal power control for C-APs and S-APs. The entries of $\qa$ are fixed values of $0$ and $1$, determined according to the sets $\mathcal{A}_{\sen}$ and $\mathcal{A}_{\Com}$.

To guarantee the sensing MASR requirement, all APs are initially assigned for sensing operation, i.e., $\mathcal{A}_{\sen}=\mathcal{M}$ and $\mathcal{A}_{\Com}=\emptyset$ (see Remark~\ref{rema:SEzero}). Then, in each iteration, one AP switches into communication operation mode for maximizing the minimum SE (or equivalently SINR),  while the minimum  MASRs, required for target sensing, in all sensing zones are guaranteed. This process continues until there is no more improvement in the minimum SINR across all UEs.  We note that the greedy algorithm may fail to determine the operation modes of the APs for a given network realization. However, the \textbf{JAP-OPA }design can successfully identify the C-AP and S-AP groups. This is because the \textbf{JAP-OPA} design simultaneously optimizes the AP modes and power allocation coefficients at the APs to meet the network requirements.

\begin{algorithm}[!t]
\caption{Greedy AP mode selection}
\begin{algorithmic}[1]
\label{alg:Grreedy} 
\STATE
\textbf{Initialize}: Set  $\mathcal{A}_{\Com}=\emptyset$ and $\mathcal{A}_{\sen}=\mathcal{M}$. Set iteration index $i=0$.
\STATE Calculate $\Pi^{\star}[i]=  \min_{k\in\KK}\SE_k({A}_{\sen}, \mathcal{A}_{\Com})$
\REPEAT
\FORALL{$m \in \mathcal{A}_{\sen}$}
\STATE Set $\mathcal{A}_{s}=\mathcal{A}_{\sen} \setminus m$.\\
\IF{ $\mathrm{MASR}_l(\mathcal{A}_{s},\mathcal{A}_{\Com} \bigcup m)\!\geq\!\kappa$, $\forall l\in\LL$ }
\STATE  Calculate $\Pi_m= \min_{k\in\KK}\SINR_k( \mathcal{A}_{s},\mathcal{A}_{\Com} \bigcup m)$\\
 \ELSE
\STATE Set  $\Pi_m=0$
\ENDIF
\ENDFOR
\STATE Set $\Pi^\star[i+1]= \underset{m\in\mathcal{A}_{\Com}} \max \,\,\Pi_m$\\
\STATE $e=|\Pi^\star[i+1]- \Pi^\star[i]|$ 
\IF{$e \geq e_{\min}$ }
\STATE {Update $\mathcal{A}_{\Com}\!=\!\{\mathcal{A}_{\Com}\bigcup m^{\star}\}$ and $\mathcal{A}_{\sen}\!=\!\mathcal{A}_{\sen}\!\setminus\! m^{\star}$}
\ENDIF
\STATE {Set $i=i+1$}
\UNTIL{ $e < e_{\min}$ }
\RETURN $\mathcal{A}_{\sen}$ and $\mathcal{A}_{\Com}$, i.e., the indices of APs operating in radar mode and communication mode, respectively.
\end{algorithmic}
\end{algorithm}
\setlength{\textfloatsep}{0.25cm}

\vspace{-0.1em}
\subsubsection{Power Allocation} For a given operation mode selection vector ($\qa$), we optimize the power control coefficients ($\ETAC_{mk}, \ETAS_{ml}$) to achieve maximum fairness among the EUs. Therefore, the optimization problem~\eqref{P:Extended_PZF} is reduced to
\vspace{-0.2em}
\begin{subequations}\label{OP:PA}
    \begin{align}
        \text{\textbf{(P4):}}~\underset{\ETAC, \ETAS}{\max}  \hspace{1em}& t \label{eq:max_min_snr}\\
        \mathrm{s.t.} \hspace{1em}\, & 
        \SINR_k (\ETAC, \ETAS) \geq t,\label{OP:PA:a}\\
        &\mathrm{MASR}_l(\ETAC, \ETAS) \geq\kappa,~\forall l\in\LL, \label{OP:PA:b}\\
        &\sum\nolimits_{k\in\KK} \eta^{c}_{mk}\gammk\nu_{mk}  \leq a_m, \forall m\in\mathcal{M}, \label{OP:PA:c}\\
        &\sum\nolimits_{l\in\LL}\eta^{s}_{ml} \leq \frac{ (1 - a_m)}{N}, \forall m\in\mathcal{M}. \label{OP:PA:d}
    \end{align}
\end{subequations}
We note that constraints~\eqref{OP:PA:a} and~\eqref{OP:PA:b} are non-convex. To deal with this non-convexity, we first introduce the auxiliary variables $\xi_{mk}^2 = \eta^{c}_{mk}$.

At the next step, we replace the convex function in the numerator of~\eqref{OP:PA:a} with its concave lower bound, which results in the following constraint
\vspace{-0.1em}
\begin{align}~\label{eq:zkn}
        \frac{ z_k^{(n)}
        \Big( 2\sum_{m\in\MM}  \sqrt{ a_m } \xi_{mk}\gammk f_{mk}-z_k^{(n)}\Big)}
                 {\Delta(\Bxi, \ETAS)}   
        \geq t,
\end{align}
where $\Delta(\Bxi, \ETAS)\triangleq \rho\sum_{m\in\MM}a_m \varrho_{mk}\sum_{k'\in\KK }\xi_{mk'}^2\gamma_{mk'}   \!+\!  \rho\sum\nolimits_{m\in\M}N (1-a_m) \betmk
\sum\nolimits _{l\in\LL} \etamls  \!+\!  1$ and $z_k^{(n)} \triangleq\sum_{m\in\MM}  \sqrt{ a_m } \xi_{mk}^{(n)}\gammk f_{mk}$. Therefore, the power control design problem for fixed AP mode operation is given by
\begin{subequations}\label{OP:PA:final}
    \begin{align}
        \text{\textbf{(P5):}}~\underset{\Bxi, \ETAS, t}{\max}  \hspace{1em}& t \label{eq:max_min_snr}\\
        \mathrm{s.t.} \hspace{1em}\, & 
    ~\eqref{eq:zkn},~\eqref{OP:PA:b}-\eqref{OP:PA:d}, 
    \end{align}
\end{subequations}

The optimization problem in~\eqref{OP:PA:final} is again convex and can be efficiently solved using CVX~\cite{cvx}.

\begin{Remark}~\label{rema:SEzero}
    It is important to note that for a given network realization, it is likely that greedy and/or random AP selection schemes cannot guarantee the sensing requirement in the network design. Hence, we introduce a new metric known as the \textbf{success sensing rate}, defined as the ratio of the number of network realizations with successful sensing for all sensing zones to the total number of channel realizations. For a fair comparison, in each channel realization, if the MASR requirements are not met or the optimization problem of a scheme is infeasible, we set the SE of that scheme to zero.
\end{Remark}

\vspace{-1.2em}
\section{Numerical Examples}~\label{sec:num}
In this section, we verify the correctness of our analytical results and the performance of the proposed algorithms.
\vspace{-1.1em}
\subsection{Large-scale Fading Model and System Parameters}
We assume that the $M$ APs and $K$ UEs are uniformly distributed in a square of $0.5 \times 0.5$ km${}^2$, whose edges are wrapped around to avoid the boundary effects~\cite{Behdad:GC:2022,Mohammadi:JSAC:2023}. The sensing zones are randomly positioned at a height of $30$ meters above the square area. We assume, as in~\cite{Liu:TSP:2020,Dong:TGCN:2023,Zhang:TWC:2024,Chen:TCOM:2024}, that each AP possesses an estimate of the target directions. The large-scale fading coefficients $\beta_{mk}$ are modeled as~\cite[Eq. (37)]{emil20TWC}. The values of the network parameters are $\tau=200$, and $\tau_t=K+L$. We also set the bandwidth $B=50$ MHz
and the noise figure $F = 9$ dB. Thus, the noise power $\Sn=k_B T_0 B F$, where $k_B=1.381\times 10^{-23}$ Joules/${}^o$K is the Boltzmann constant, while $T_0=290^o$K is the noise temperature. Let $\tilde{\rho}= 1$ W,  and $\tilde{\rho}_t = 0.25$~W be the maximum transmit power of the APs and uplink training pilot sequences, respectively. The normalized maximum transmit powers ${\rho}$ and ${\rho}_t$ are calculated by dividing these powers by the noise power. Moreover, we set $\varrho=85$.

To evaluate the performance of the proposed AP mode selection schemes, we hereafter consider random AP selection with optimized power allocation (\textbf{RAP-OPA}). 
In the numerical results, \textbf{JAP-OPA} denotes our proposed design in \textbf{Algorithm~\ref{alg1}}, i.e., joint AP mode selection and power allocation design. Moreover, the greedy AP selection and optimized power allocation design is indicated by~\textbf{GAP-OPA}.  To make a fair comparison, the number of sensing APs in the \textbf{RAP-OPA} design, $M_s$, is chosen to be close to the average optimized values obtained by the proposed \textbf{JAP-OPA} scheme.

\begin{figure}[t]
\centering
\includegraphics[width=0.44\textwidth]{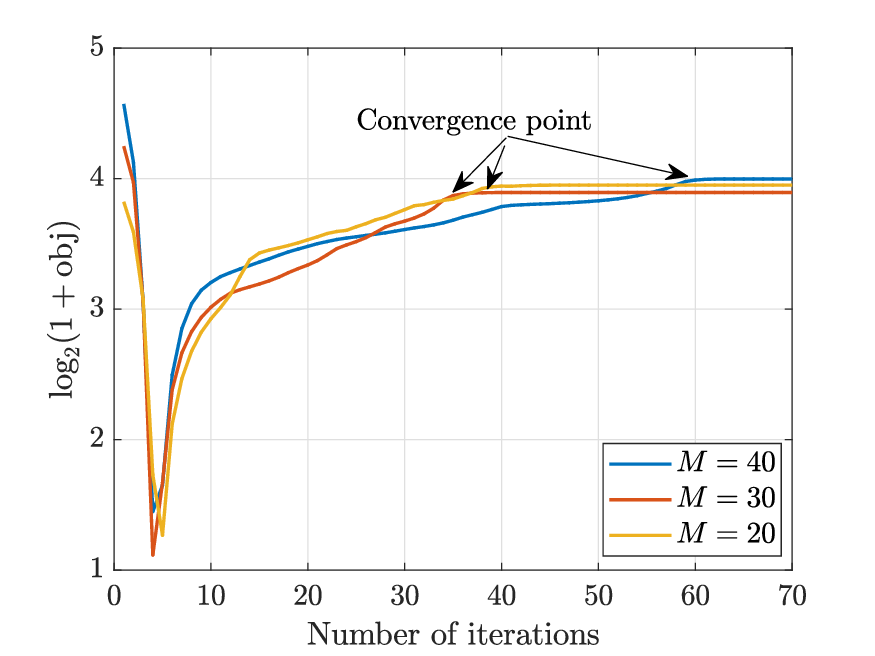}
\vspace{-0.5em}
\caption{\small Convergence behavior of \textbf{Algorithm~\ref{alg1}}  ($MN=480$, $K=4$, $L=2$, $\lambda=10$).}
\label{fig6}
\end{figure}

\begin{table}[t!]
\small
\caption{\small Comparison between the execution time  (in seconds) of \textbf{JAP-OPA} and \textbf{GAP-OPA}}
\vspace{-1em}
\footnotesize
\begin{center}
\renewcommand{\arraystretch}{1}
\begin{tabular}{|p{2cm}|r|r|r|r|}
\hline
\multicolumn{1}{|c|}{\multirow{2}{*}{\textbf{Design}}} & \multicolumn{4}{c|}{\textbf{Number of APs ($M$)}}                                                \\ \cline{2-5} 
\multicolumn{1}{|c|}{}                                   & \multicolumn{1}{c|}{\textbf{$20$}} & \multicolumn{1}{c|}{\textbf{$30$}} & \multicolumn{1}{c|}{\textbf{$40$}} & \multicolumn{1}{c|}{\textbf{$48$}} \\ \hline
\textbf{JAP-OPA}      &\centering173.61     & 269.93     &375.27   & 424.84      \\ \hline
\textbf{GAP-OPA}      &\centering 25.88     &31.15     &35.21   &39.47       \\ \hline

\end{tabular}%
\end{center}
\label{table:execution_time}
\end{table}

\vspace{-0.8em}
\subsection{Results and Discussions}
In Fig.~\ref{fig6}, we compare the convergence rate of \textbf{Algorithm~\ref{alg1}} for different number of APs in the network. To solve~\eqref{JAPpower_final}, we use the convex conic solver MOSEK and set $\lambda=10$. This figure shows the values of $\log_2\big(1+\text{obj}\big)$, where $\text{obj}\triangleq t-\lambda \sum\nolimits_{m\in\MM}a_m-a_m^{(n)}\big(2a_m\!-\!a_m^{(n)}\big)$ for different number of iterations. When the optimization problem begins to converge, i.e., $a_m\approx a_m^{(n)}\in\{0,1\}$, we have $\text{obj}\approx\log_2(1+t)$, where the auxiliary variable is $t =  \min_{k\in\KK} \SINR_k (\mathbf{a}, \ETAC, \ETAS)$. This implies that the value of the objective function converges to the minimum SE per-UE. The transient behavior stems from the penalty term $\lambda \sum\nolimits_{m\in\MM}a_m-a_m^{(n)}\big(2a_m\!-\!a_m^{(n)}\big)$ in the optimization problem. We can see that with a small number of iterations (less than $60$ iterations), \textbf{Algorithm~\ref{alg1}} returns the optimized solution.  Furthermore, it is worth mentioning that the resulting values of the parameters $a_m$ converge to $1$ and $0$ with high accuracy.

\begin{figure}[t]
\centering
\includegraphics[width=0.44\textwidth]{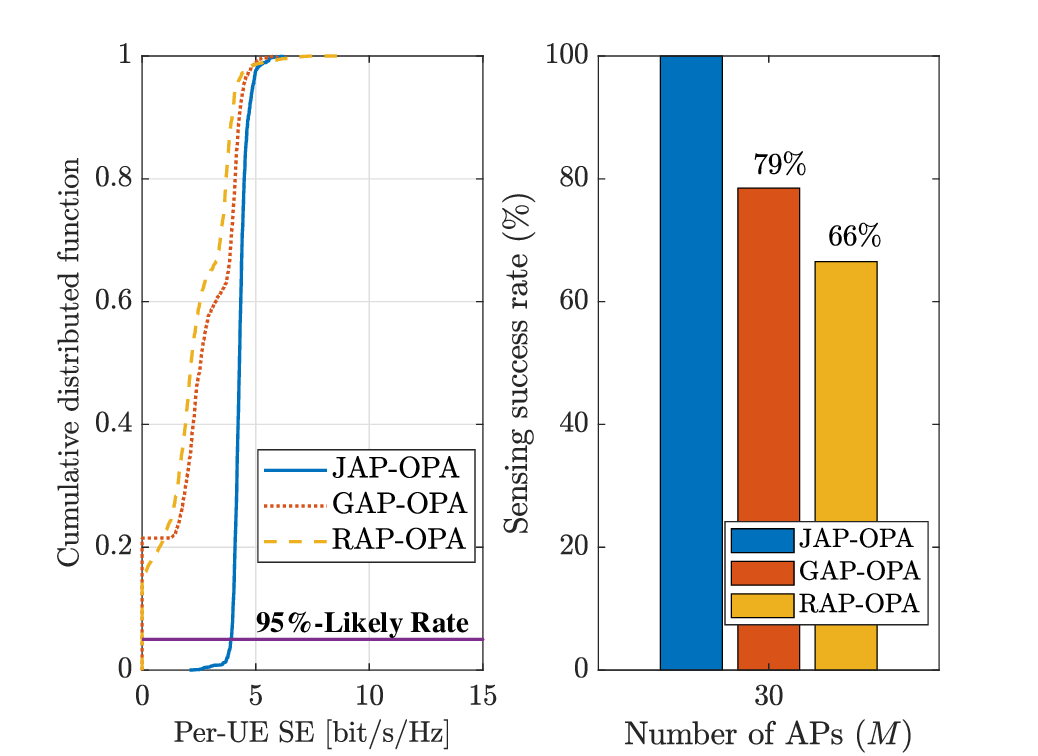}
\vspace{-0.5em}
\caption{\small CDF of the per-UE SE and sensing success rate ($\kappa=8$ dB, $M=30$, $N=16$, $K=4$, and $L=2$).}
\label{fig2}
\end{figure}

Table~\ref{table:execution_time} presents the execution time per iteration for the \textbf{JAP-OPA} and \textbf{GAP-OPA} schemes. The computational resources include a CPU processor, the 12th Gen Intel $\!\!$\textsuperscript{\tiny{\textregistered}} $\!\!$ Core $\!\!\!\!$ \textsuperscript{\tiny{\texttrademark}}  i7-1270P (average clock speed 3.50 GHz) with 16GB RAM.
It can be observed that the execution time for \textbf{RAP-OPA }scales proportionally with an increase in $M$, whereas the execution time for \textbf{GAP-OPA} remains relatively unaffected by changes in $M$. To further reduce the design complexity, a promising future research direction is the development of low-complexity algorithms that leverage artificial intelligence and machine learning to minimize the processing time.

The empirical cumulative distributed functions (CDFs) of the per-UE SE for all scenarios with $\kappa=8$ dB,  $MN=480$, $K=4$ and $L=2$ are plotted in Fig.~\ref{fig2}. The bar chart corresponds to the sensing success rate of different schemes. It can be observed that while \textbf{JAP-OPA} guarantees the sensing requirement for all network realizations, \textbf{RAP-OPA} fails to meet this requirement, with a success rate of $66\%$. In other words, the power allocation problem becomes infeasible in $37\%$ of cases. On the other hand, \textbf{GAP-OPA} yields comparable results, achieving a success rate of approximately $79\%$, which confirms the effectiveness of the proposed greedy algorithm. For the infeasible cases, according to our discussion in Remark~\ref{rema:SEzero}, we set the SE of the UEs to zero. 
By comparing the CDFs of the per-UE SE across all schemes, we observe that \textbf{JAP-OPA} provides a satisfactory fairness among the UEs. Moreover, the SE achieved by \textbf{JAP-OPA} is much higher than that of the \textbf{GAP-OPA} and \textbf{RAP-OPA}. 
\begin{figure}[t]
\centering
\includegraphics[width=0.44\textwidth]{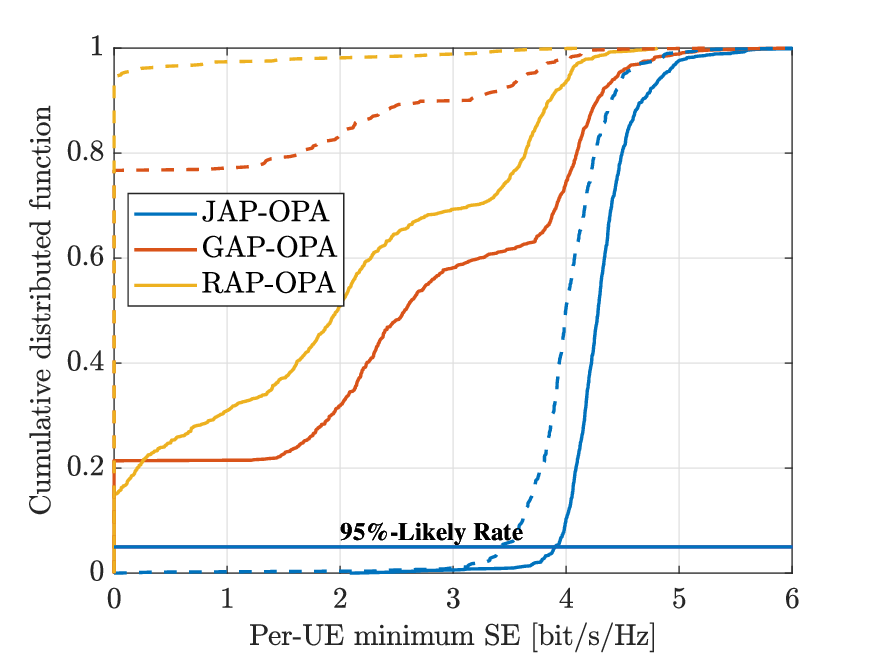}
\vspace{-0.5em}
\caption{\small CDF of the per-UE minimum SE for different schemes and for different number of sensing zones. The dashed lines depict results for $L=3$ while the solid lines show results for $L=2$ ($\kappa=6$ dB, $M=30$, $N=16$, $K=4$).}
\label{fig3}
\end{figure}

Figure~\ref{fig3} shows the worst SE (minimum SE) across the UEs for given realizations of the large-scale fading coefficients and for two different number of sensing zones: $L=3$ (dashed lines) and $L=2$ (solid lines). The proposed \textbf{JAP-OPA} scheme significantly improves the worst SE over the benchmarks and provides $100\%$ sensing success rate for both scenarios. Moreover, we observe that joint optimization can guarantee nearly the same minimum SE per UE across all different channel realizations, while random and greedy AP selection provides varying levels of per-UE SE. For example, when $L=2$, in $60\%$ of channel realizations, the minimum SE obtained via \textbf{GAP-OPA} is less than $3$ bit/s/Hz, while \textbf{JAP-OPA} consistently provides a minimum SE greater than $4$ bit/s/Hz.  Moreover, by increasing the number of sensing zones from $2$ to $3$, the sensing performance of both the \textbf{GAP-OPA} and \textbf{RAP-OPA} severely degraded, resulting in success rates of $27\%$ and $4\%$ for \textbf{GAP-OPA} and \textbf{RAP-OPA}, respectively. These findings confirm the importance of joint AP mode selection and power allocation for both tasks.

Figure~\ref{fig4} shows the average of the minimum SE per-UE  with a fixed number of antenna units in the network versus the number of APs, i.e., we set $MN=480$. By increasing $M$ from $20$ to $48$, the number of antennas per AP, $N$, is reduced from $24$ to $10$. This assumption ensures a fair comparison and allows us to assess the performance of CF-mMIMO under different conditions: 1) a small number of APs with a large number of antennas simulates a more co-located scenario; 2) a small number of antennas with a larger number of APs simulates a more distributed system.  By increasing the number of APs, the proposed \textbf{JAP-OPA} outperforms all other schemes, while \textbf{RAP-OPA} yields the worst performance in terms of SE. Moreover, the sensing success rate of  \textbf{GAP-OPA} and \textbf{RAP-OPA} decreases as the number of APs increases. It is most likely for \textbf{RAP-OPA} to fail when $M>30$.  This behavior is due to the fact that the MASR heavily depends on $N$ (i.e., it is proportional to $N^2$ according to~\eqref{eq:MASR1}). Therefore, the MASR decreases significantly as $M$ increases, and both \textbf{RAP-OPA} and \textbf{GAP-OPA} schemes cannot guarantee to meet the sensing requirement under this condition.  Conversely, the proposed \textbf{JAP-OPA} scheme ensures successful sensing for all network realizations, while delivering much higher per-UE minimum SE compared to the \textbf{RAP-OPA} and \textbf{GAP-OPA} schemes. Interestingly, the SE performance of the  the proposed \textbf{JAP-OPA} scheme is improved by increasing the number of APs. These results reveal the importance of joint AP mode selection and power control design in dense CF-mMIMO networks. 
\begin{figure}[t]
\centering
\includegraphics[width=0.44\textwidth]{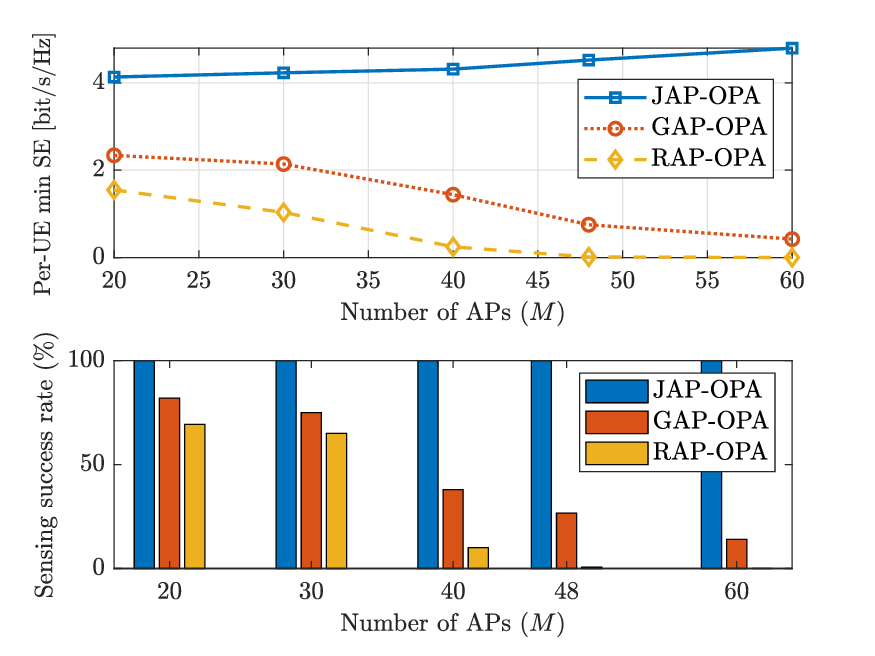}
\vspace{-0.7em}
\caption{\small Average of the per-UE minimum SE and sensing success rate versus the number of APs ($\kappa=8$ dB, $MN=480$, $K=4$, and $L=2$).}
\vspace{-1.4em}
\label{fig4}
\end{figure}

\begin{figure}[t]
\centering
\includegraphics[width=0.44\textwidth]{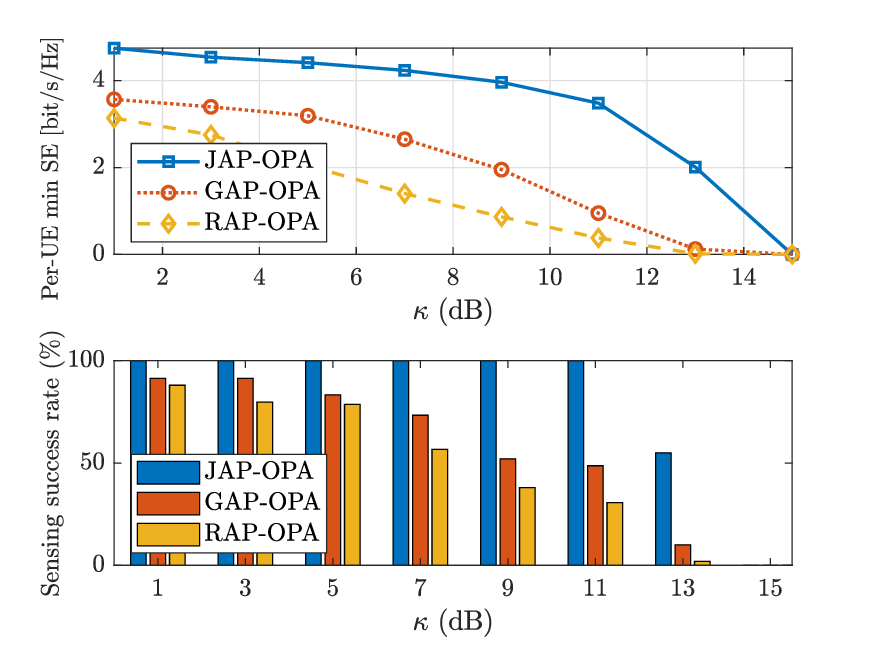}
\vspace{-1em}
\caption{\small Average of the per-UE minimum SE and sensing success rate versus $\kappa$ ($MN=480$,  $M=30$, $K=4$, and $L=2$).}
\vspace{-1em}
\label{fig5}
\end{figure}

Figure~\ref{fig5} illustrates the average of the minimum SE per-UE versus $\kappa$, along with the sensing success rate of different schemes as a function of $\kappa$. As the sensing requirements (i.e., $\kappa$) increase, the sensing success rate of all schemes decreases. Specifically, for moderate values of $\kappa$ (i.e., $6 < \kappa < 10$), the minimum SE per UE for both \textbf{RAP-OPA} and \textbf{GAP-OPA} drops sharply, while for \textbf{JAP-OPA}, the decrease is more gradual. When $\kappa$ exceeds $11$ dB, the infeasibility of \textbf{JAP-OPA} algorithm rises significantly for the given setup, leading to a sharp degradation in its performance as zero SE and failed sensing results are accounted for in infeasible scenarios.

\begin{figure}[t]
\centering
\includegraphics[width=0.44\textwidth]{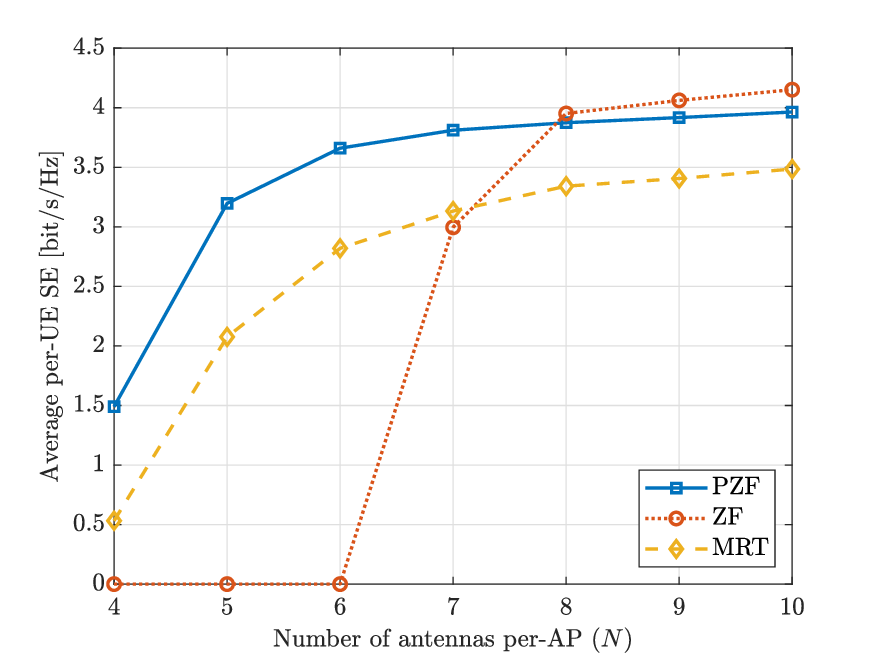}
\vspace{-0.5em}
\caption{\small  Average per-UE minimum SE versus $N$ ($\kappa=8$ dB, $M=60$, $K=6$, and $L=2$).}
\label{fig7}
\end{figure}

Figure~\ref{fig7} shows the average per-UE minimum SE for the PZF, ZF, and MRT precoding schemes as a function of the number of antennas at each AP, with $M=60$ and $K=6$. From this figure, we observe that when $N\le K$, ZF fails to support the UEs, while both MRT and PZF can support all UEs, with PZF providing the best performance. When the number of antennas per AP exceeds $8$, ZF begins to outperform the other schemes, as it gains sufficient degrees of freedom to effectively cancel interference among all UEs, while its multiplexing gain surpasses that of PZF. The reason ZF outperforms PZF is that, with the current threshold $\varrho$, some UEs still fall into the weak group for certain APs. In other words, a level of interference remains when the PZF scheme is employed. To overcome this limitation, the threshold can be adjusted to its optimal value. With a large number of antennas, this adjustment enables PZF to associate all users with the strong group. Accordingly, PZF and ZF would effectively converge, resulting in identical performance. 

\section{Conclusion}~\label{sec:conc}
In this paper, we proposed a distributed ISAC implementation underpinned by a CF-mMIMO architecture. We analyzed both the exact and asymptotic SE performance of the downlink communication system and provided exact results for the MASR metric of the sensing zones. The operation mode of the distributed APs and their transmit power coefficients were jointly optimized to maximize fairness among communication UEs, while ensuring a specific level of MASR for different sensing zones in the network. To reduce the computational complexity of the proposed optimized scheme, a low-complexity design was also developed, wherein the operation mode of the APs was determined through a greedy algorithm and then the transmit power of the APs was optimized via a bisection algorithm. The proposed AP mode selection and power control design was shown to provide significant performance over the benchmark systems with random/greedy AP mode selection for both communication and sensing operations. Exploring the processing of echo/reflected signals in a multi-static sensing scenario would be highly relevant. 

Future research directions could focus on AP mode selection for multi-static configurations, identifying the optimal receive APs for processing these reflected signals, and subsequently combining them at the CPU for enhanced localization and tracking purposes.  Another key characteristic of AP mode selection is the offered opportunity  to improve the security in cell-free ISAC systems. In fact, deliberately directing information beams towards sensing targets to improve the sensing performance can introduce a potential risk of information leakagee~\cite{Ren:JSAC:2024}. However, by employing our design, the likelihood of information leakage to the sensing (potentially suspicious) targets can be significantly reduced. A detailed investigation of secure ISAC systems, by considering AP mode selection, is indeed an interesting future research direction.

\vspace{-0.1em}
\appendix
\vspace{-0.6em}
\subsection{Useful Lemma}~\label{Apx:wPZF}
\vspace{-1.6em}
 \begin{Lemma}~\label{lemma:wPZF}
For the ZF beamforming vector $\qt^{\ZFcom}_{mk}$, defined in~\eqref{eq:Tzf}, we have
\vspace{-0.1em}
\begin{align}~\label{eq:outerprod_wPZF}
 \Ex\Big\{\tmkZFcom (\tmkZFcom)^\dag\Big\} &= \frac{\gammk}{N(N-|\Sm|)}\qI_N.
\end{align}
\end{Lemma}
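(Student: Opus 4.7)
The plan is to reduce the weighted ZF precoder to a standard (isotropic) pseudoinverse column, and then invoke unitary invariance together with the mean of an inverse complex Wishart matrix. Since $\hat{\qg}_{mj}\sim\mathcal{CN}(\boldsymbol{0},\gamma_{mj}\qI_N)$ are mutually independent across $j\in\Sm$, I would first whiten the columns by writing $\hat{\qg}_{mj}=\sqrt{\gamma_{mj}}\,\qu_{mj}$ with $\qu_{mj}\sim\mathcal{CN}(\boldsymbol{0},\qI_N)$ i.i.d., and collect them into $\qU\in\mathbb{C}^{N\times|\Sm|}$. Letting $\qD=\mathrm{diag}(\sqrt{\gamma_{mj}}:j\in\Sm)$, we have $\hat{\qG}_{\Sm}=\qU\qD$, so
\begin{align*}
\hat{\qG}_{\Sm}\bigl(\hat{\qG}_{\Sm}^{\dag}\hat{\qG}_{\Sm}\bigr)^{-1}
= \qU\qD\,\qD^{-1}(\qU^{\dag}\qU)^{-1}\qD^{-1}
= \qU(\qU^{\dag}\qU)^{-1}\qD^{-1}.
\end{align*}
Multiplying by $\gamma_{mk}$ and $\qe_k$ collapses the $\qD^{-1}\qe_k=\gamma_{mk}^{-1/2}\qe_k$ factor, giving the compact form
\begin{align*}
\qt_{mk}^{\ZFcom}=\sqrt{\gamma_{mk}}\;\qU(\qU^{\dag}\qU)^{-1}\qe_k.
\end{align*}

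Next, I would argue the outer-product expectation is a scalar multiple of $\qI_N$ by unitary invariance. For any deterministic unitary $\qPhi\in\mathbb{C}^{N\times N}$, $\qPhi\qU$ has the same distribution as $\qU$ (the entries are i.i.d.\ $\mathcal{CN}(0,1)$), and a direct substitution shows $\qPhi\qU(\qPhi\qU)^{\dag}(\qPhi\qU)(\qPhi\qU)^{\dag}$-type manipulations give $\qPhi\,\qU(\qU^{\dag}\qU)^{-1}\qe_k\stackrel{d}{=}\qU(\qU^{\dag}\qU)^{-1}\qe_k$ for the distribution of the vector $\qv_k\triangleq\qU(\qU^{\dag}\qU)^{-1}\qe_k$. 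Hence $\Ex\{\qv_k\qv_k^{\dag}\}=\qPhi\Ex\{\qv_k\qv_k^{\dag}\}\qPhi^{\dag}$ for every unitary $\qPhi$, forcing $\Ex\{\qv_k\qv_k^{\dag}\}=c\,\qI_N$ for some scalar $c\ge 0$.

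To pin down $c$, I would take the trace and use the standard complex Wishart identity. Taking trace of both sides yields $cN=\Ex\{\|\qv_k\|^2\}=\Ex\{[(\qU^{\dag}\qU)^{-1}]_{kk}\}$. Since $\qU^{\dag}\qU\sim\mathcal{W}_{|\Sm|}(N,\qI_{|\Sm|})$ is a complex Wishart matrix, its inverse has mean $\Ex\{(\qU^{\dag}\qU)^{-1}\}=\frac{1}{N-|\Sm|}\qI_{|\Sm|}$ (valid for $N>|\Sm|$, which PZF assumes). Therefore $\Ex\{[(\qU^{\dag}\qU)^{-1}]_{kk}\}=\frac{1}{N-|\Sm|}$, giving $c=\frac{1}{N(N-|\Sm|)}$ and
\begin{align*}
\Ex\bigl\{\qt_{mk}^{\ZFcom}(\qt_{mk}^{\ZFcom})^{\dag}\bigr\}
=\gamma_{mk}\,\Ex\{\qv_k\qv_k^{\dag}\}
=\frac{\gamma_{mk}}{N(N-|\Sm|)}\qI_N,
\end{align*}
as claimed. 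The only non-routine step is justifying the isotropy argument; everything else is algebraic reduction and a textbook inverse-Wishart mean. No obstacle is expected beyond bookkeeping of the column-weighting $\qD$.
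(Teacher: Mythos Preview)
Your proposal is correct and follows essentially the same route as the paper: both arguments establish that the outer-product expectation commutes with every unitary conjugation (hence is a scalar multiple of $\qI_N$) and then identify the scalar via the trace and the inverse complex-Wishart mean $\Ex\{(\qU^{\dag}\qU)^{-1}\}=\tfrac{1}{N-|\Sm|}\qI_{|\Sm|}$. Your explicit whitening $\hat{\qG}_{\Sm}=\qU\qD$ is a clean way to justify the column-symmetry that the paper invokes when averaging $\Ex\{\qq_k\qq_k^{\dag}\}$ over $k'$, but the two proofs are otherwise the same.
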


\begin{proof}
Define 
$\qQ \!\triangleq \!\hat{\qG}_{\Sm} \big (\hat{\qG}_{\Sm} ^{\dag} \hat{\qG}_{\Sm}\big)^{ \!-1}$ 
 and $\qq_k=\qQ\qe_k$, which is  the $k$th column of $\qQ\in \mathbb{C}^{N\times \vert \Sm\vert}$. Then, we have
\vspace{-0.1em}
    \begin{align}
\gammk^2\Ex\Big\{\qq_k \qq_k^\dag\Big\}= \frac{\gammk^2}{\vert \Sm \vert  } \sum\nolimits^{\vert \Sm \vert}_{k'=1} \Ex\Big\{\qq_{k'} \qq_{k'}^\dag\Big\}.
    \end{align}
Accordingly, we have
\vspace{-0.1em}
\begin{align}
&\Ex\Big\{ \tmkZFcom (\tmkZFcom)^\dag \Big\}=\frac{\gammk^2}{\vert \Sm \vert  }  \Ex\big\{\qQ \qQ^\dag\big\}\nonumber\\
&  =\frac{\gammk^2}{\vert \Sm \vert  }   \Ex\Big\{ 
       \hat{\qG}_{\Sm}\! \left (\hat{\qG}_{\Sm}^{\dag}\! \hat{\qG}_{\Sm}\right)^{\!-1} \!
      \left (\!\hat{\qG}_{\Sm}^{\!H}\! \hat{\qG}_{\Sm}\!\right)^{\!-1}
      \hat{\qG}_{\Sm}^\dag\Big\}.
 \end{align}
Let $\qB=\Ex\big\{\hat{\qG}_{\Sm} \big (\hat{\qG}_{\Sm} ^{\dag} \hat{\qG}_{\Sm}\big) ^{-1} \big (\hat{\qG}_{\Sm} ^{\dag} \hat{\qG}_{\Sm}\big) ^{-1}     \hat{\qG}_{\Sm}^\dag\big\}$,   For any $N\times N$ unitary matrix $\BOmega$, we have       
      \vspace{-0.1em}
       \begin{align}~\label{eq:omBom}
    \BOmega  \qB\BOmega^{\dag}\! = \!\Ex\big\{  \BOmega
       \hat{\qG}_{\Sm} \big (\hat{\qG}_{\Sm} ^{\dag} \hat{\qG}_{\Sm}\big)^{\!-1} 
       \big (\hat{\qG}_{\Sm} ^{\dag} \hat{\qG}_{\Sm}\big) ^{\!\!-1}
      \hat{\qG}_{\Sm}^{\dag}\BOmega^{\dag}\big\}.
 \end{align}
Now, we define $  \bar{\qG}_{\Sm}=\BOmega \hat{\qG}_{\Sm}$. Thus, using~\eqref{eq:omBom}, we obtain
\vspace{-0.1em}
       \begin{align}
    \BOmega  \qB\BOmega^{\dag} \!=\!\! \Ex\big\{ 
       \bar{\qG}_{\Sm} \big(\bar{\qG}_{\Sm}^{\dag}\!\! \bar{\qG}_{\Sm}\big)^{-1} 
      \big (\bar{\qG}_{\Sm}^{\dag}\!\! \bar{\qG}_{\Sm}\big)^{\!-1}
      \bar{\qG}_{\Sm}^\dag\big\},
 \end{align}
where we have used
\vspace{-0.1em}
\begin{align}
\bar{\qG}^{\dag}_{\Sm} \bar{\qG}_{\Sm}&=\hat{\qG}^{\dag}_{\Sm}  \BOmega^{\dag}  \BOmega \hat{\qG}_{\Sm}  
       = \hat{\qG}^{\dag}_{\Sm} \hat{\qG}_{\Sm}. 
 \end{align}

Since $\bar{\qG}_{\Sm}=   \BOmega \hat{\qG}_{\Sm}$ is statistically identical to $\hat{\qG}_{\Sm}$, we have $\BOmega  \qB\BOmega^{\dag} = \qB$, for any unitary matrix  $\BOmega$.
By using an eigenvalue decomposition, $\qB$ can be expressed as $\qB=\qW \qD_{\lambda} \qW^{\dag}$, where $\qW$ is a unitary matrix and $\qD_{\lambda}$ is a diagonal matrix. Then, $\BOmega  \qB\BOmega^{\dag} = \qB$ is equivalent to
\begin{align}\label{eq:star}
\BOmega \qW \qD_{\lambda} \BOmega^{\dag} \qW^{\dag} = \qU \qD_{\lambda} \qU^{\dag}=\qB.
\end{align}
Since~\eqref{eq:star} is true for any unitary $\qU$, $\qB$ must be a scaled identity matrix. This implies that $\qB=c_1 \qI_N$, where $c_1$ is a constant, given by
\vspace{-0.1em}
\begin{align}
c_1&= \frac{1}{N} \Ex\Big\{ \trac \Big(
       \hat{\qG}_{\Sm} \left (\hat{\qG}_{\Sm} ^{\dag} \hat{\qG}_{\Sm}\right)^{ \!-1} 
      \left (\hat{\qG}_{\Sm}^{\dag} \hat{\qG}_{\Sm}\right)^{ \!-1}
      \hat{\qG}_{\Sm}^\dag\Big)\Big\} \nonumber\\
      &
      = \frac{1}{N} \Ex\Big\{ \trac (
      \hat{\qG}_{\Sm} ^{\dag} \hat{\qG}_{\Sm}) ^{-1} \Big\}
      \nonumber\\
      &
      =\frac{\vert \Sm\vert}{ N(N-\vert \Sm \vert)\gammk}.
\end{align}

\vspace{-0.4em}
\subsection{Proof of Proposition~\ref{Prop:SE:PPZF}}
\label{Prop:SE:PPZF:proof}
We first present the desired signal term as
\vspace{0.4em}
\begin{align}~\label{eq:DS:ap}
\mathrm{DS}_k  
 &= 
\Ex\Big\{ \sum\nolimits_{m\in\ZK}  \sqrt{a_m\rho \etamkc}  (\hat{\qg}_{m k} +\tilde{\qg}_{m k})^{\dag} \qt^{\ZFcom}_{mk} \nonumber\\
&\hspace{1em}+\sum\nolimits_{m\in\MK}  \sqrt{a_m\rho \etamkc}  (\hat{\qg}_{m k} +\tilde{\qg}_{m k})^{\dag} \qt^{\MRcom}_{mk}  \Big\}
\nonumber\\
 &
=\sum\nolimits_{m\in\ZK}\!\!\!  \sqrt{a_m\rho\etamkc}\gammk\!\!+\!\!N\!\sum\nolimits_{m\in\MK} \!\! \sqrt{a_m\rho\etamkc} \gammk,
\end{align}
where we have used the fact that $ \qt^{\ZFcom}_{mk}$, $ \qt^{\MRcom}_{mk}$ and $ \tilde{\qg}_{m k}$ are zero mean and independent. 

We can now proceed with the following derivation 
\vspace{0.4em}
\begin{align}
 &\Ex\big\{ \big\vert  \mathrm{BU}_k  \big\vert^2\big\}= 
 \Ex\big\{ \big\vert\sum\nolimits_{m\in\ZK}  \sqrt{a_m\rho \etamkc} \gtmk \qt^{\ZFcom}_{mk} \nonumber\\
&\hspace{1em} 
 +\sum\nolimits_{m\in\MK}  \sqrt{a_m\rho \etamkc} 
 \gtmk \qt^{\MRcom}_{mk} 
 \big\vert^2\big\}-\mathrm{DS}_k.
\end{align}

Since the variance of the sum of independent RVs is equal to the sum of the variances, we have
\vspace{0.4em}
\begin{align} ~\label{eq:BU:1}
\vartheta_k&=\Ex\Big\{ \Big\vert\sum\nolimits_{m\in\ZK}  \sqrt{a_m\rho \etamkc}  \gtmk \qt^{\ZFcom}_{mk} \nonumber\\
&\hspace{2em} +\sum\nolimits_{m\in\mathcal{M}_k}  \sqrt{a_m\rho \etamkc} \gtmk \qt^{\MRcom}_{mk} \Big\vert^2\Big\} \nonumber\\
&
=\vartheta_{Z_k}+\vartheta_{M_k}+2\Big( \sum\nolimits_{m\in\ZK}  \sqrt{a_m\rho \etamkc}\gammk   \Big) \nonumber\\
&\hspace{2em} \times \Big( N\sum\nolimits_{{m\in\mathcal{M}_k}} \sqrt{a_m\rho \etamkc}\gammk   \Big),
\end{align}
where 
\vspace{0.4em}
\begin{align}
\vartheta_{Z_k} &= \Ex\Big\{ \big\vert{{\sum\nolimits_{m\in\ZK}  \sqrt{a_m\rho \etamkc} \gtmk \qt^{\ZFcom}_{mk} }} \big\vert^2\Big\}\nonumber\\
\vartheta_{M_k}&=  \Ex\Big\{ \big\vert{{\sum\nolimits_{m\in\MK}  \sqrt{a_m\rho \etamkc} \gtmk \qt^{\MRcom}_{mk} }} \big\vert^2\Big\}.
\end{align}

We first focus on $\vartheta_{Z_k}$, which can be computed as
\vspace{0.4em}
\begin{align}~\label{eq:BU2:ap}
\vartheta_{Z_k} 
&= \Ex\Big\{ \Big\vert\sum\nolimits_{m\in\ZK}  \sqrt{a_m\rho \etamkc} (\hat{\qg}_{m k} +\tilde{\qg}_{m k})^{\dag} \qt^{\ZFcom}_{mk}  \Big\vert^2\Big\}\nonumber\\
&
=\sum\nolimits_{m\in\ZK} a_m \rho \etamkc\nonumber\\
&\hspace{2em} \times 
\Ex\Big\{  (\qt^{\ZFcom}_{mk})^{\dag}
\Ex\Big\{    \tilde{\qg}^{\dag}_{m k}  (\tilde{\qg}^{\dag}_{m k})^{\dag} \Big\} \qt^{\ZFcom}_{mk}  \Big\} \nonumber\\
&
= \sum\nolimits_{m\in\ZK}\!\!\!  \sqrt{a_m\rho\etamkc}\gammk\sum\nolimits_{m\in\ZK} \!\!\!\!a_m \rho \etamkc(\beta_{mk}\!\!-\! \gammk) 
\nonumber\\
&\hspace{2em} \times 
\gammk^2\Ex\Big\{ \!  \big[(\qG_{\Sm} ^\dag \qG_{\Sm}\!)^{\!-1} \big]_{kk}   \! \Big\} \nonumber\\
&
= \sum\nolimits_{m\in\ZK}\!\!\!  \sqrt{a_m\rho\etamkc}\gammk\nonumber\\
&\hspace{1em}+\sum\nolimits_{m\in\ZK} a_m \rho \etamkc(\beta_{mk}-\gammk) \frac{\gammk}{N-|\Sm|}.
\end{align}
Then, the second term of \eqref{eq:BU:1},  $\vartheta_M$, can be derived as
\vspace{0.4em}
\begin{align}~\label{eq:BU1:ap}
\vartheta_{M_k} 
 &
 =\sum\nolimits_{m\in\MK} \!\!\!a_m \rho \etamkc
 \text{var}\Big((\|\hat{\qg}_{m k}\|^2 +\tilde{\qg}_{m k}^{\dag}\hat{\qg}_{m k}\Big)\nonumber\\
 &+
 \bigg\vert\Ex\Big\{\sum\nolimits_{m\in\MK} a_m \sqrt{\rho\etamkc} 
 (\hat{\qg}_{m k} +\tilde{\qg}_{m k})^{\dag}\qt^{\MRcom}_{mk}\Big\}\bigg\vert^2 \nonumber\\
&
 =\sum\nolimits_{m\in\MK} a_m \rho\etamkc N \gammk\betmk\nonumber\\
 &\hspace{4em} +\bigg(N\sum\nolimits_{m\in\MK} a_m \sqrt{\rho\etamkc}\gammk\bigg)^2.
 \end{align}
 As a result, plugging \eqref{eq:BU2:ap} and \eqref{eq:BU1:ap} into \eqref{eq:BU:apf}, we have
 \vspace{0.4em}
\begin{align}~\label{eq:BU:apf}
 &\Ex\Big\{ \big\vert  \mathrm{BU}_k  \big\vert^2\Big\}
= \rho\sum\nolimits_{m\in\ZK} a_m \etamkc(\betmk-\gammk) \frac{\gamma_{mk}}{N-|\Sm|}
\nonumber\\
&\hspace{6em}+\rho\sum\nolimits_{m\in\MK} N a_m  \etamkc  \gammk\betmk.
\end{align}

The same steps can be followed to compute $  \mathrm{IUI}_{kk'}$:
\begin{align}~\label{eq:IR:ap}
\mathrm{IUI}_{kk'}=& \rho\sum\nolimits_{m\in\ZK} a_m \etamkpc(\betmk-\gammk)
\frac{\gamma_{mk'}}{N-|\Sm|} 
\nonumber\\
&
+N\sum\nolimits_{m\in\MK} a_m  \etamkpc \gammk\beta_{mk'}.
\end{align} 
Moreover, by using~\eqref{eq:IRk}, we get
\begin{align}~\label{eq:IR:ap2}
  \Ex\Big\{ \big\vert \IR_{k} \big\vert^2\Big\} 
   &= \rho\sum\nolimits_{m\in\M}\sum\nolimits _{l\in\LL}\rho\etamls(1-a_m) \nonumber\\
  &\times\Ex\Big\{\vert\qg_{mk}^\dag \qa_N(\theta_{t,ml})\vert^2\Big\}.
\end{align}
Letting $\qA=\qa_N(\theta_{t,ml})\qa^\dag_N(\theta_{t,ml})$, we have 
\begin{align}~\label{eq:IR:ap2}
  &\Ex\Big\{ \big\vert \IR_{k} \big\vert^2\Big\} 
  =\!\rho\sum\nolimits_{m\in\M} \sum\nolimits _{l\in\LL}\etamls(1-a_m) \Ex\Big\{\vert\qg_{mk}^T \qA\vert^2\Big\}
  \nonumber\\
   &=\!\rho\sum\nolimits_{m\in\M}\sum\nolimits _{l\in\LL}\etamls(1-a_m) \Ex\Big\{\qg_{mk}^T \qA\qg_{mk}^*\Big\}
    \nonumber\\
&=\!\rho\sum\nolimits_{m\in\M}\!\!\sum\nolimits _{l\in\LL}\!\!\etamls(1\!-\!a_m) \betmk\trac(\qA).
  \end{align}

We notice that $\trac(\qA)=N$.  To this end, by substituting~\eqref{eq:DS:ap},~\eqref{eq:BU:apf},~\eqref{eq:IR:ap}, and~\eqref{eq:IR:ap2} into~\eqref{eq:dLSE}, the desired result in~\eqref{eq:dLSNIRf} is obtained.
\end{proof}


\balance
\bibliographystyle{IEEEtran}
\bibliography{IEEEabrv,ref_bib}

\end{document}